\newtheorem{theorem}{Theorem}[section]
\newtheorem{lemma}[theorem]{Lemma}
\newtheorem{definition}[theorem]{Definition}
\newtheorem{example}{Example}
\newenvironment{proof}{\noindent
  \textbf{Proof.}}{\hfill$\Box$\\}
\newenvironment{proofidea}{\noindent
  \textbf{Proof idea.}}{\hfill$\Box$\\}
\newenvironment{proofsketch}{\noindent
  \textbf{Proof sketch.}}{\hfill$\Box$\\}
\newcommand{\CMATELCDLT}{\textbf{CMATEL(CD+LT)}}
\newcommand{\cut}[1]{}
\newcommand{\instr}[5]{\ensuremath{\hbox to 60 pt
    {${#1}$\hfil${#2}$\hfil$ \rightarrow
      $\hfil${#3}$\hfil${#4}$\hfil${#5}$}}}
\newcommand{\D}{\ensuremath{\Delta}}
\newcommand{\G}{\ensuremath{\Gamma}}
\newcommand{\vp}{\ensuremath{\varphi}}
\newcommand{\fm}[1]{\emph{#1}}
\newcommand{\de}[1]{\emph{#1}}
\newcommand{\rel}[1]{\ensuremath{\mathcal{#1}}}
\newcommand{\power}[1]{\ensuremath{\mathcal{P}(#1)}}
\newcommand{\powerne}[1]{\ensuremath{\mathcal{P}^{\tiny +}(#1)}}
\newcommand{\union}{\, \cup \,}
\newcommand{\bigunion}{\bigcup \,}
\newcommand{\biginter}{\bigcap \,}
\newcommand{\inter}{\, \cap \,}
\newcommand{\crh}[2]{\ensuremath{\{\, #1 \mid \, #2\, \}}}
\newcommand{\set}[1]{\ensuremath{ \{ #1 \} }}
\newcommand{\nat}{\ensuremath{\mathbb{N}}}
\newcommand{\CMAELCD}{\textbf{CMAEL(CD)}}
\newcommand{\lang}{\ensuremath{\mathcal{L}}}
\newcommand{\CTL}{\textbf{CTL}}
\newcommand{\LTL}{\ensuremath{\textbf{LTL}}}
\newcommand{\con}{\wedge}
\newcommand{\equivalence}{\leftrightarrow}
\newcommand{\ap}{\textbf{\texttt{AP}}}
\newcommand{\truth}{\ensuremath{\top}}
\newcommand{\next}{\!\raisebox{-.2ex}{ 
            \mbox{\unitlength=0.9ex
            \begin{picture}(2,2)
            \linethickness{0.06ex}
            \put(1,1){\circle{2}} 
            \end{picture}}}       
            \,}
          \newcommand{\until}{\ensuremath{\hspace{2pt}\mathcal{U}}}
\newcommand{\kframe}[1]{\ensuremath{\mathfrak{#1}}}
\newcommand{\mmodel}[1]{\ensuremath{\mathcal{#1}}}
\newcommand{\hintikka}[1]{\ensuremath{\mathcal{#1}}}
\newcommand{\sat}[3]{\ensuremath{\mmodel{#1}, #2 \Vdash #3}}
\newcommand{\notsat}[3]{\ensuremath{\mmodel{#1}, #2 \nVdash #3}}
\newcommand{\Rule}[1]{\textbf{(#1)}}
\newcommand{\subf}[1]{\ensuremath{\mathsf{Sub}(#1)}}
\newcommand{\agents}{\ensuremath{\Sigma}}
\newcommand{\st}[1]{\ensuremath{\mathbf{states}(#1)}}
\newcommand{\brancharrow}{\ensuremath{\Longrightarrow}}
\newcommand{\tableau}[1]{\ensuremath{\mathcal{#1}}}
\newcommand{\knows}[1]{\ensuremath{\mathbf{K}}_{#1}}
\newcommand{\commonk}[1]{\ensuremath{\mathbf{C}}_{#1}}
\newcommand{\distrib}[1]{\ensuremath{\mathbf{D}}_{#1}}
\newcommand{\psmodel}[1]{\ensuremath{\mathcal{M}^*}}
\newcommand{\pseudomodel}[1]{\ensuremath{\mathcal{M}^{**}}}
\begin{document}

\title{Tableau-based decision procedure for full coalitional
  multiagent temporal-epistemic logic of linear time}

\author{Valentin Goranko\footnote{School of Mathematics,
       University of the Witwatersrand, South Africa} \and Dmitry
     Shkatov\footnote{School of Computer Science,
       University of the Witwatersrand, South Africa}}

\date{}
   
\maketitle

\begin{abstract}
  We develop a tableau-based decision procedure for the full
  coalitional multiagent temporal-epistemic logic of linear time
  \CMATELCDLT. It extends \LTL\ with operators of common and
  distributed knowledge for all coalitions of agents. The tableau
  procedure runs in exponential time, matching the lower bound
  obtained by Halpern and Vardi for a fragment of our logic, thus
  providing a complexity-optimal decision procedure for \CMATELCDLT.
\end{abstract}

\section{Introduction}
\label{sec:into}

Knowledge and time are among the most important aspects of multiagent
systems. Various \emph{temporal-epistemic logics}, proposed as logical
frameworks for reasoning about these ascpects of multiagent systems
were studied in a number of publications during the 1980's, eventually
summarized in a uniform and comprehensive study by Halpern and Vardi
\cite{HV89}. In~\cite{HV89}, the authors considered several essential
characteristics of temporal-epistemic logics: \emph{one vs. several
  agents}, \emph{synchrony vs. asynchrony}, \emph{(no) learning},
\emph{(no) forgetting}, \emph{linear vs.  branching time}, and the
(non-) existence of a \emph{unique initial state}. Based on these,
they identify and analyze 96 temporal-epistemic logics and obtain
lower bounds for the complexity of a satisfiability problem in each of
them. It turns out that most of the logics with more than one agent
who do not learn or do not forget, are undecidable (with common
knowledge), or decidable but with non-elementary time lower bound
(without common knowledge). For the remaining multiagent logics, the
lower bounds from~\cite{HV89} range from PSPACE (systems without
common knowledge), through EXPTIME (with common knowledge), to
EXPSPACE (syn\-chro\-nous systems with no learning and unique initial
state). To the best of our knowledge, however, even for the logics
from~\cite{HV89} with a relatively low 
complexity lower bound, no decision procedures with matching
upper bounds have been developed. In this paper, we set out to develop
such decision procedures based on incremental tableaux, starting with
the multiagent case over linear time, which involves no essential
interaction between knowledge and time. It turns out that, under no
other assumptions regarding learning or forgetting, the synchronous
and asynchronous semantics are equivalent with respect to
satisfiability. We consider a more expressive epistemic language than
the ones considered in \cite{HV89}, to wit, the one involving
operators for common and for distributed knowledge for \emph{all
  coalitions} of agents. We call the resulting logic \CMATELCDLT\
(``\textit{\textbf{C}oalitional \textbf{M}ulti-\textbf{A}gent
  \textbf{T}emporal \textbf{E}pistemic \textbf{L}ogic with operators
  for \textbf{C}ommon and \textbf{D}istributed knowledge and
  \textbf{L}inear \textbf{T}ime}''). The decision procedure for
satisfiability in \CMATELCDLT\ developed herein runs in exponential
time, which together with the lower bound for the fragment of
\CMATELCDLT\ obtained in~\cite{HV89}, implies EXPTIME-completeness of
\CMATELCDLT.

\section{The Logic \\ CMATEL(CD+LT)}
\label{sec:ETLTL_CD}

\subsection{Syntax}
\label{sec:syntax}

The language \lang\ of \CMATELCDLT\ contains a set \ap\ of atomic
propositions, a sufficient repertoire of Boolean connectives, say
$\neg$ (``not'') and $\con$ (``and''), the temporal operators $\next$
(``next'') and $\until$ (``until'') of the logic \LTL, as well as the
epistemic operators $\distrib{A} \vp$ (``\emph{it is distributed
  knowledge among agents in $A$ that $\vp$}''), and $\commonk{A} \vp$
(``\emph{it is common knowledge among agents of $A$ that $\vp$}'') for
every non-empty $A \subseteq \agents$, where $\agents$ is the set of
names of agents belonging to \lang. The set $\agents$ is assumed to be
finite and non-empty; its subsets are called \emph{coalitions} (of
agents).  Thus, the formulae of \CMATELCDLT\ are defined as follows:
$$\vp := p \mid \neg \vp \mid (\vp_1 \con \vp_2) \mid \next \vp
\mid (\vp_1 \until \vp_2) \mid \distrib{A} \vp \mid \commonk{A} \vp$$
where $p$ ranges over \ap\ and $A$ ranges over the set of non-empty
subsets of $\agents$, henceforth denoted \powerne{\agents}.  We write
$\vp \in \lang$ to mean that $\vp$ is a formula of $\lang$.

The operators of individual knowledge $\knows{a} \vp$, where $a \in
\agents$ (``\emph{agent $a$ knows that \vp}''), can then be defined as
$\distrib{\set{a}} \vp$, henceforth written $\distrib{a} \vp$.  The
other Boolean and temporal connectives can be defined as usual.  We
omit parentheses when this does not result in ambiguity.

Formulae of the form $\neg \commonk{A} \vp$ are \emph{epistemic
  eventualities}, while those of the form $\vp \until \psi$ are
\emph{temporal eventualities}. 

$\nat = \{0,1,\ldots \}$ denotes the set of natural numbers.

\subsection{Semantics}
\label{sec:semantics}

\begin{definition}
  \label{def:stes}
  A \emph{temporal-epistemic system} (TES) is a
  tuple $\kframe{G} = (\agents, S, R, \set{\rel{R}^D_A}_{A \in
    \powerne{\agents}}, \set{\rel{R}^C_A}_{A \in
    \powerne{\agents}})$, where: 
  \begin{enumerate}
    \itemsep-2pt
  \item $\agents$ is a finite, non-empty set of \de{agents};
  \item $S \ne \emptyset$ is a set of \de{states};
  \item $R$ is a non-empty set of \de{runs}; where each $r \in R$ is a
    function $r: \nat \mapsto S$. A pair $(r, n)$, where $r \in R$ and
    $n \in \nat$, is called a \de{point}. The set of all points in
    $\kframe{G}$ is denoted $P(\kframe{G})$. Every point $(r, n)$
    represents the state $r(n)$; note, however, that different points
    can represent the same state.
  \item for every $A \in \powerne{\agents}$, $\rel{R}^D_A$ and
    $\rel{R}^C_A$ are binary relation on P(\kframe{G}), such that
    $\rel{R}^C_A$ is the reflexive and transitive closure of
    $\bigunion_{A' \subseteq A} R^D_{A'}$.
  \end{enumerate}
\end{definition}

\begin{definition}
  \label{def:stefs}
  A \de{temporal-epistemic frame} (TEF) is a TES $\kframe{G} =
  (\agents, S, R, \set{\rel{R}^D_A}_{A \in \powerne{\agents}},
  \set{\rel{R}^C_A}_{A \in \powerne{\agents}})$, where each
  $\rel{R}^D_A$ is an equivalence relation satisfying the following
  condition: (\dag) $\rel{R}^D_A = \biginter_{a \in A}
  \rel{R}^D_{\set{a}}$.  If condition (\dag) is replaced by the
  following: (\dag \dag) $\rel{R}^D_A \subseteq \rel{R}^D_B \text{
    whenever } B \subseteq A$, then \kframe{F} is a
  \de{temporal-epistemic pseudo-frame} (pseudo-TEF).
\end{definition}

Notice that, in (pseudo-)TEFs, $\rel{R}^C_A$ is the transitive closure
of $\bigunion_{a \in A} R^D_{\set{a}}$ and, thus, an equivalence
relation.

\begin{definition}
  \label{def:cmaem}
  A \de{temporal-epistemic model} (TEM, for short) is a tuple
  $\mmodel{M} = (\kframe{F}, L)$, where
  \begin{description}
    \itemsep-4pt
  \item[($i$)] $\kframe{F}$ is a TEF with a set of runs $R$;
  \item[($ii$)] $L: R \times \nat\ \mapsto \power{\ap}$
    is a \de{labeling function}, where $L(r,n)$ is the set of atomic
    propositions true at $(r,n)$.
  \end{description}
  If the condition (i) is replaced by the requirement that
  $\kframe{F}$ is a pseudo-TEF, then \mmodel{M} is a 
  \de{temporal-epistemic pseudo-model} (pseudo-TEM).
\end{definition}

A TES $\kframe{G}$ is called \emph{synchronous} if for every $A \in
\powerne{\agents}$, if $((r, n), (r',n')) \in \rel{R}^D_A$, then $n =
n'$. Synchronous temporal-epistemic (pseudo)-models are defined
accordingly. Hereafter we consider the general case, but all
definitions and results apply likewise to the synchronous case, unless
stated otherwise. The tableau construction can accommodate the
synchronous case at no extra cost and eventually we show that, under
no other assumptions, the presence or absence of synchrony does not
affect the satisfiability of formulae.

\begin{definition}
  \label{def:satisfaction}
  The satisfaction of formulae at points in 
  (pseudo-)TEMs is defined as follows:

\sat{M}{(r, n)}{p} iff $p \in L(r, n)$;

\sat{M}{(r, n)}{\neg \vp} iff not \sat{M}{(r, n)}{\vp};

\sat{M}{(r, n)}{\vp \con \psi} iff \sat{M}{(r, n)}{\vp} and
\sat{M}{(r, n)}{\psi};

\sat{M}{(r, n)}{\next \vp} iff \sat{M}{(r, n+1)}{\vp};

\sat{M}{(r, n)}{\vp \until \psi} iff $\sat{M}{(r, i)}{\psi}$ for some
$i \geq n$ \newline \indent ~ \hspace{1cm} such that \sat{M}{(r,
  j)}{\vp} for every $n \leq j < i$;

\sat{M}{(r, n)}{\distrib{A} \vp} iff \sat{M}{(r',n')}{\vp} \newline
\indent ~ \hspace{1cm} for every $((r, n), (r', n')) \in \rel{R}^D_A$;

\sat{M}{(r, n)}{\commonk{A} \vp} iff \sat{M}{(r',n')}{\vp} \newline
\indent ~ \hspace{1cm} for every $((r, n), (r', n')) \in \rel{R}^C_A$;
\end{definition}

Note, that in the semantics defined above \emph{the labelling function
  acts on points, not states}, i.e., it is \emph{point-based}. To make
the semantics \emph{state-based}, one must impose the additional
condition: if $r(n) = r'(n')$ then $L(r, n) = L(r', n')$. However, for
the case of linear time logics these two semantics are equivalent in
terms of satisfiability and validity (this is an easy consequence of
the fact that, in the linear case, all epistemic operators have
built-in implicit universal quantification over paths).

The satisfaction condition for the operator $\commonk{A}$ can be
paraphrased in terms of reachability.  Let $\kframe{F}$ be a (pseudo-)
TEF over the set of runs $R$ and let $(r, n) \in R \times \nat$.  We
say that a point $(r', n')$ is \de{$A$-reachable from $(r, n)$} if
either $r = r'$ and $n = n'$ or there exists a sequence $(r, n) =
(r_0, n_0), (r_1, n_1), \ldots, (r_{m-1}, n_{m-1}), (r_m, n_{m}) =
(r', n')$ of points in $R \times \nat$ such that, for every $0 \leq i
< m$, there exists $a_i \in A$ such that $((r_i, n_{i}), (r_{i+1},
n_{i+1})) \in R^D_{a_i}$. Then, the satisfaction condition for
$\commonk{A}$ becomes equivalent to the following:

\sat{M}{(r, n)}{\commonk{A} \vp} iff \sat{M}{(r', n')}{\vp}
  whenever $(r', n')$ is $A$-reachable from $(r, n)$.

Satisfiability and validity in (a class of) models is defined as
usual.

It is easy to see that if $\agents = \set{a}$, then $\distrib{a} \vp
\equivalence \commonk{a} \vp$ is valid in every TEM for every $\vp \in
\lang$. Thus, the single-agent case is essentially trivialized and,
therefore, we assume hereafter that $\agents$ contains at least 2
(names of) agents.
\section{Hintikka structures}
\label{sec:hintikka_structures}

Even though we are ultimately interested in testing formulae of
$\lang$ for satisfiability in a TEM, the tableau procedure we present
tests for satisfiability in a more general kind of semantic structures,
namely a \emph{Hintikka structure}. We will show that $\theta \in
\lang$ is satisfiable in a TEM iff it is satisfiable in a Hintikka structure, hence the latter test is equivalent to the former.  The
advantage of working with Hintikka structures lies in the fact that
they contain as much semantic information about $\theta$ as is
necessary, and no more.  More precisely, while models provide the
truth value of every formula of $\lang$ at every state, Hintikka
structures only determine the truth of formulae directly involved in
the evaluation of a fixed formula $\theta$, in whose satisfiability we
are interested.  Another important difference between models and
Hintikka structures is that, in Hintikka structures the epistemic
relations $\rel{R}^D_A$ and $\rel{R}^C_A$ only have to satisfy the
properties laid down in Definition~\ref{def:stes}.  All the other
information about the desirable properties of epistemic relations is
contained in the labeling of states in Hintikka structures. This
labeling ensures that every Hintikka structure generates a
pseudo-model (by the construction of Lemma~\ref{lem:STEHS_to_psSTEM}),
which can then be turned into a model using the construction of
Lemma~\ref{lem:psSTEM_to_STEM}.

\begin{definition}
  \label{def:fully_expanded}
  A set $\D \subseteq \lang$ is \de{fully expanded} if it satisfies
  the following conditions (\subf{\psi} stands for the set of
  subformulae of $\psi$):
  \begin{enumerate}
    \itemsep-2pt
 \item if $\neg \neg \vp \in \D$ then $\vp \in \D$;
  \item if $\vp \con \psi \in \D$, then $\vp \in \D$ and $\psi \in
    \D$;
  \item if $\neg (\vp \con \psi) \in \D$ then $\neg \vp \in \D$ or
    $\neg \vp \in \D$;
  \item if $\neg \next \vp \in \D$ then $\next \neg \vp \in \D$;
  \item if $\vp \until \psi \in \D$ then $\psi \in \D$ or $\vp, \next
    (\vp \until \psi) \in \D$;  
   \item if $\neg (\vp \until \psi) \in \D$ then $\neg \psi,
    \neg \vp \in \D$ or $\neg \psi, \neg \next (\vp \until \psi) \in
    \D$;
  \item if $\distrib{A} \vp \in \D$ then $\distrib{A'} \vp \in \D$
    for every $A'$ such that $A \subseteq A' \subseteq \agents$;
  \item if $\distrib{A} \vp \in \D$ then $\vp \in \D$;
  \item if $\commonk{A} \vp \in \D$ then $\distrib{a} (\vp \con
    \commonk{A} \vp) \in \D$ for every $a \in A$;
  \item if $\neg \commonk{A} \vp \in \D$ then $\neg \distrib{a} (\vp
    \con \commonk{A} \vp) \in \D$ for some $a \in A$;
  \item if $\psi \in \D$ and $\distrib{A} \vp \in \subf{\psi}$ then
    either $\distrib{A} \vp \in \D$ or $\neg \distrib{A} \vp \in \D$.
  \end{enumerate}
\end{definition}

\begin{definition}
  \label{def:hs}
  A \de{temporal-epistemic Hintikka structure} (TEHS) is a tuple
  $(\agents, S, R, \set{\rel{R}^D_A}_{A \in \powerne{\agents}},
  \set{\rel{R}^C_A}_{A \in \powerne{\agents}}, H)$ such that
  $(\agents, S, R, \set{\rel{R}^D_A}_{A \in \powerne{\agents}},
  \set{\rel{R}^C_A}_{A \in \powerne{\agents}})$ is a TES, and $H$ is a
  labeling of points in $R \times \nat$ with sets of formulae,
  satisfying the following conditions, for all $(r, n) \in R \times
  \nat$:
  \begin{description}
    \itemsep-2pt
  \item[H1] if $\neg \vp \in H(r, n)$, then $\vp \notin H(r, n)$;
  \item[H2] $H(r, n)$ is fully expanded;
  \item[H3] if $\next \vp \in H(r, n)$, then $\vp \in H(r, n+1)$;
  \item[H4] if $\vp \until \psi \in H(r, n)$, then there exists $i
    \geq n$ such that $\psi \in H(r,i)$ and $\vp \in H(r, j)$
    holds for every $n \leq j < i$;
  \item[H5] if $\neg \distrib{A} \vp \in H(r, n)$, then there exists
    $r' \in R$ and $n'\in \nat$ such that $((r,n), (r', n')) \in \rel{R}^D_A$ and $\neg \vp \in H(r', n')$;
  \item[H6] if $((r,n), (r', n')) \in \rel{R}^D_A$, then $\distrib{A'}
    \vp \in H(r, n)$ iff $\distrib{A'} \vp \in H(r', n')$, for every
    $A' \subseteq A$;
  \item[H7] if $\neg \commonk{A} \vp \in H(r, n)$, then there exists
    $r' \in R$ and $n'\in \nat$ such that $((r, n), (r', n')) \in \rel{R}^C_A$ and $\neg \vp \in H(r', n')$. 
    \end{description}
    Synchronous TEHSs (STEHSs) are defined likewise.
\end{definition}

\begin{definition}
\label{def:sat_in_hintikka}
A set of formulae $\Theta$ is \emph{satisfiable} in a TEHS
$\hintikka{H}$ with a labeling function $H$ if there exists a point
$(r, n) \in \hintikka{H}$ such that $\Theta \subseteq H(r,
n)$. Analogously for formulae.
\end{definition}

Now, we show that $\theta \in \lang$ is satisfiable in a TEM iff it
is satisfiable in a TEHS. One direction is almost immediate, as
every TEM naturally induces a TEHS.  More precisely, given a TEM
$\mmodel{M}$, define the \emph{extended labeling} $L^+$ on the set of
points of $\mmodel{M}$ as follows: $L^+(r, n) = \crh{\vp}{\sat{M}{(r,
    n)}{\vp}}$ for every $(r, n)$.  The following claim is then
straightforward.

\begin{lemma}
  \label{lem:form_models_to_hintikka}
  Let $\mmodel{M} = (\kframe{F}, L)$ be a TEM satisfying $\theta \in
  \lang$, and let $L^+$ be the extended labeling on \mmodel{M}.  Then,
  $\hintikka{H} = (\kframe{F}, \ap, L^+)$ is a TEHS satisfying
  $\theta$.
\end{lemma}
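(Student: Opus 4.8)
The plan is to verify directly that the structure $\hintikka{H} = (\kframe{F}, \ap, L^+)$ satisfies each of the conditions \textbf{H1}--\textbf{H7} from Definition~\ref{def:hs}, using only the satisfaction clauses of Definition~\ref{def:satisfaction} and the fact that $\mmodel{M}$ is a genuine TEM. Since $\kframe{F}$ is already a TEF, hence in particular a TES, the frame component of $\hintikka{H}$ needs no further checking; everything reduces to showing that the extended labeling $L^+$ behaves correctly. Throughout, I will repeatedly use the obvious observation that $\vp \in L^+(r,n)$ iff $\sat{M}{(r,n)}{\vp}$, together with the fact that, in a model, exactly one of $\vp$ and $\neg\vp$ is satisfied at any point.

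First I would dispatch \textbf{H1}: if $\neg\vp \in L^+(r,n)$ then $\sat{M}{(r,n)}{\neg\vp}$, so $\notsat{M}{(r,n)}{\vp}$, hence $\vp \notin L^+(r,n)$. For \textbf{H2}, I need $L^+(r,n)$ to be fully expanded, i.e.\ to satisfy all eleven clauses of Definition~\ref{def:fully_expanded}; each clause is an immediate unfolding of the corresponding semantic clause (e.g.\ clause~5 follows from $\sat{M}{(r,n)}{\vp \until \psi}$ forcing either $\psi$ true now, or $\vp$ true now and $\vp \until \psi$ true at the next point, which is exactly the standard fixpoint unfolding; clauses~9 and~10 use the semantic equivalence of $\commonk{A}\vp$ with $\bigwedge_{a\in A}\distrib{a}(\vp \con \commonk{A}\vp)$ that is built into the reachability characterization of $\commonk{A}$ given after Definition~\ref{def:satisfaction}; clause~11 uses that a model assigns a truth value to every formula). \textbf{H3} follows from the clause for $\next$; \textbf{H4} from the clause for $\until$. \textbf{H5}: if $\neg\distrib{A}\vp \in L^+(r,n)$ then $\notsat{M}{(r,n)}{\distrib{A}\vp}$, so by the clause for $\distrib{A}$ there is $((r,n),(r',n'))\in\rel{R}^D_A$ with $\notsat{M}{(r',n')}{\vp}$, i.e.\ $\neg\vp\in L^+(r',n')$. \textbf{H7} is the same argument with $\rel{R}^C_A$ in place of $\rel{R}^D_A$.

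The one clause that requires a little more than direct unfolding is \textbf{H6}: if $((r,n),(r',n'))\in\rel{R}^D_A$ then for every $A'\subseteq A$, $\distrib{A'}\vp\in L^+(r,n)$ iff $\distrib{A'}\vp\in L^+(r',n')$. For this I would argue that $\rel{R}^D_{A'}\supseteq\rel{R}^D_A$ — which holds in a TEF by condition~(\dag), since $\rel{R}^D_{A'}=\bigcap_{a\in A'}\rel{R}^D_{\{a\}}\supseteq\bigcap_{a\in A}\rel{R}^D_{\{a\}}=\rel{R}^D_A$, and which also holds in a pseudo-TEF directly by~(\dag\dag) — so in particular $\rel{R}^D_{A'}$ is an equivalence relation containing the pair $((r,n),(r',n'))$. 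Hence $(r,n)$ and $(r',n')$ have exactly the same set of $\rel{R}^D_{A'}$-successors, so $\sat{M}{(r,n)}{\distrib{A'}\vp}$ iff $\sat{M}{(r',n')}{\distrib{A'}\vp}$, giving the claim. This symmetry/transitivity argument, trivial as it is, is the only place where the frame conditions of Definition~\ref{def:stefs} (rather than merely Definition~\ref{def:satisfaction}) are used, and is the ``main obstacle'' only in the sense that it is the single step not obtained by pure syntactic unfolding. Finally, since $\mmodel{M}$ satisfies $\theta$, there is a point $(r,n)$ with $\sat{M}{(r,n)}{\theta}$, whence $\theta\in L^+(r,n)=H(r,n)$, so $\hintikka{H}$ satisfies $\theta$ in the sense of Definition~\ref{def:sat_in_hintikka}. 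The same argument, using synchrony of $\mmodel{M}$, shows that a synchronous TEM induces an STEHS.
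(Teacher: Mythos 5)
Your proof is correct and is exactly the direct verification of conditions \textbf{H1}--\textbf{H7} that the paper has in mind when it dismisses this lemma as ``straightforward'' without giving a proof. One small quibble: \textbf{H6} is not quite the \emph{only} place the TEF conditions of Definition~\ref{def:stefs} enter --- clause~7 of full expansion also needs $\rel{R}^D_{A'} \subseteq \rel{R}^D_{A}$ for $A \subseteq A'$ (i.e.\ condition (\dag)), and clauses~9--10 need reflexivity of the $\rel{R}^D_{\{a\}}$ --- but your verifications of those clauses are themselves correct, so this affects only the side remark, not the proof.
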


For the opposite direction, we first prove that the
existence of a TEHS satisfying $\theta$ implies the existence of a
pseudo-model satisfying $\theta$; then, we show that this in turn
implies the existence of a model satisfying $\theta$.

\begin{lemma}
  \label{lem:STEHS_to_psSTEM}
  Let $\theta \in \lang$ be such that there exists a TEHS for
  $\theta$.  Then, $\theta$ is satisfiable in a pseudo-TEM.
\end{lemma}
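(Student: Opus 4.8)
The plan is to start from a TEHS $\hintikka{H} = (\agents, S, R, \set{\rel{R}^D_A}, \set{\rel{R}^C_A}, H)$ satisfying $\theta$ and build a pseudo-TEM on the same underlying TES, keeping the set of runs $R$ and the temporal structure exactly as they are, and defining the labeling $L$ by $L(r,n) = H(r,n) \cap \ap$. The temporal part then requires nothing new: conditions H3 and H4 (together with the fully-expanded clauses for $\next$ and $\until$) are precisely engineered so that the $\LTL$ satisfaction clauses hold along each run. So the real work is entirely on the epistemic side, where the given relations $\rel{R}^D_A$ in a TEHS are merely binary relations (they only satisfy Definition~\ref{def:stes}), whereas a pseudo-TEF demands that each $\rel{R}^D_A$ be an \emph{equivalence relation} satisfying ($\dag\dag$), $\rel{R}^D_A \subseteq \rel{R}^D_B$ whenever $B \subseteq A$.

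The key step is therefore to replace each $\rel{R}^D_A$ by its reflexive–transitive–symmetric closure — or, more carefully, to define new relations $\rel{\bar R}^D_A$ as the smallest equivalence relations containing the original $\rel{R}^D_A$ and closed under the monotonicity requirement ($\dag\dag$) (concretely, take the equivalence closure of $\bigcup_{A \subseteq A'} \rel{R}^D_{A'}$ for each $A$, or define the $\rel{\bar R}^D_{\set{a}}$ first and then intersect/close appropriately). One then sets $\rel{\bar R}^C_A$ to be the reflexive-transitive closure of $\bigcup_{a \in A} \rel{\bar R}^D_{\set a}$, as required. The crucial claim to verify is that passing to these closures does \emph{not} disturb the labeling conditions needed for satisfaction, i.e. that for all points connected by $\rel{\bar R}^D_A$ the relevant $\distrib{A'}$-formulae still propagate correctly, and that the existential witnesses demanded by $\neg\distrib{A}\vp$ (H5) and $\neg\commonk{A}\vp$ (H7) are still present. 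For the universal direction this follows because H6 already states that $\distrib{A'}\vp$-membership is \emph{invariant} along $\rel{R}^D_A$-edges for every $A' \subseteq A$ — so invariance is preserved under reflexive, symmetric and transitive closure, and under the monotonicity closure one uses clause~7 of Definition~\ref{def:fully_expanded} ($\distrib{A}\vp \in \D \Rightarrow \distrib{A'}\vp \in \D$ for $A \subseteq A'$) to see that enlarging the relation is harmless. For $\commonk{A}$, one shows by induction on the length of an $A$-reachability chain, using clauses~9 and~10 of Definition~\ref{def:fully_expanded} (which unfold $\commonk{A}\vp$ into $\distrib{a}(\vp \con \commonk{A}\vp)$ and dually), that $\commonk{A}\vp \in H(r,n)$ forces $\vp \in H(r',n')$ at every $\rel{\bar R}^C_A$-reachable point, and H7 supplies the falsifying witness in the other direction.

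Concretely the steps are: (1) fix the notation, define $L$, $\rel{\bar R}^D_A$, $\rel{\bar R}^C_A$ and check that the result is a pseudo-TEF (equivalence relations, ($\dag\dag$), and $\rel{\bar R}^C_A$ the right closure); (2) prove a Truth Lemma: for every subformula $\vp$ of $\theta$ (more precisely, every $\vp$ that can occur in the extended closure used by the tableau) and every point $(r,n)$, if $\vp \in H(r,n)$ then $\sat{M^{**}}{(r,n)}{\vp}$, and if $\neg\vp \in H(r,n)$ then $\notsat{M^{**}}{(r,n)}{\vp}$ — by simultaneous induction on the structure of $\vp$, the temporal cases using H3–H4, the epistemic cases using H5–H7 and the fully-expanded clauses as above; (3) conclude, since $\theta \in H(r,n)$ for some point by Definition~\ref{def:sat_in_hintikka}, that $\theta$ is satisfied there. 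The main obstacle is step~(2) for the epistemic operators after the relations have been enlarged: one must be sure that the closure operation introduces no new pairs that would make some $\distrib{A}\vp \in H(r,n)$ false at a newly-reachable point, and the argument for that rests squarely on the invariance encoded in H6 together with the downward/upward coherence of $\distrib{\cdot}$ and $\commonk{\cdot}$ formulae guaranteed by the fully-expanded conditions — so the proof must track exactly which coalition superscripts are controlled at each point. Once that bookkeeping is in place, the common-knowledge case is a routine induction on reachability paths.
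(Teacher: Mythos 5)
Your plan coincides with the paper's own proof: the paper likewise keeps the runs and labeling $L(r,n)=H(r,n)\cap\ap$, replaces each $\rel{R}^D_A$ by the reflexive--symmetric--transitive closure of $\bigunion_{A\subseteq B}\rel{R}^D_B$ (with $\rel{R}'^C_A$ the induced closure), and then proves exactly your two-part truth lemma by induction, using H6 for the universal $\distrib{A}$ direction, H5/H7 plus $\rel{R}^D_A\subseteq\rel{R}'^D_A$ for the existential witnesses, and an inner induction along $A$-reachability chains for $\commonk{A}$. The approach and the key lemmas are essentially identical, so no further comparison is needed.
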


\begin{proof}
  Let $\hintikka{H} = (\agents, S,
  R, \set{\rel{R}^D_A}_{A \in \powerne{\agents}}, \set{\rel{R}^C_A}_{A
    \in \powerne{\agents}}, \linebreak H)$ be a TEHS for $\theta$.  We
  build a pseudo-TEM satisfying $\theta$ as follows.  First, for every
  $A \in \powerne{\agents}$, let $\rel{R}'^D_A$ be the reflexive,
  symmetric, and transitive closure of $\bigunion_{A \subseteq B}
  \rel{R}^D_{B}$ and let $\rel{R}'^C_A$ be the transitive closure of
  $\bigunion_{a \in A} \rel{R}'^D_a$. Notice that $\rel{R}^D_A
  \subseteq \rel{R}'^D_A$ and $\rel{R}^C_A \subseteq \rel{R}'^C_A$ for
  every $A \in \powerne{\agents}$. Next, let $L(r, n) = H(r, n) \inter
  \ap$, for every point $(r, n) \in R \times \nat$.  It is then easy
  to check that $\mmodel{M}' = (\agents, S, R, \set{\rel{R}'^D_A}_{A
    \in \powerne{\agents}}, \linebreak \set{\rel{R}'^C_A}_{A \in
    \powerne{\agents}}, \ap, L)$ is a pseudo-TEM.  It is also easy to
  check that the construction preserves synchrony.

  To complete the proof of the lemma, we show, by induction on the
  formula $\chi \in \lang$ that, for every point $(r, n)$ and every
  $\chi \in \lang$, the following hold:

  \textbf{(i)} $\chi \in H(r, n) \text{ implies } \sat{M'}{(r,
    n)}{\chi}$;

  \textbf{(ii)} $\neg \chi \in H(r, n) \text{ implies } \sat{M'}{(r,
    n)}{\neg \chi}$.

  Let $\chi$ be some $p \in \ap$.  Then, $p \in H(r, n)$ implies $p
  \in L(r, n)$ and thus, \sat{M'}{(r,n)}{p}; if, on the other hand,
  $\neg p \in H(r, n)$, then due to (H1), $p \notin H(r, n)$ and thus
  $p \notin L(r, n)$; hence, \sat{M'}{(r, n)}{\neg p}.

  Assume that the claim holds for all subformulae of $\chi$; then, we have
  to prove that it holds for $\chi$, as well.

  Suppose that $\chi = \neg \vp$.  If $\neg \vp \in H(r, n)$, then the
  inductive hypothesis immediately gives us $\sat{M'}{(r, n)}{\neg
    \vp}$; if, on the other hand, $\neg \neg \vp \in H(r, n)$, then by
  virtue of (H2), $\vp \in H(r, n)$ and hence, by inductive
  hypothesis, $\sat{M'}{(r, n)}{\vp}$ and thus $\sat{M'}{(r, n)}{\neg
    \neg \vp}$.

  The cases of $\chi = \vp \con \psi$ and $\chi = \next \vp$ are
  straightforward, using (H2) and (H3).

  Suppose that $\chi = \distrib{A} \vp$.  Assume, first, that
  $\distrib{A} \vp \in H(r, n)$.  In view of the inductive hypothesis,
  it suffices to show that $((r, n), (r', n')) \in \rel{R}'^D_A$
  implies $\vp \in H(r, n)$.  Assuming $((r, n), (r', n')) \in
  \rel{R}'^D_A$, there are two cases to consider.  If $(r,n) = (r',
  n')$, then the conclusion immediately follows from (H2).  Otherwise,
  there exists an undirected path from $(r, n)$ to $(r', n')$ along
  the relations $\rel{R}^D_{A'}$, where each $A'$ is a superset of
  $A$.  Then, due to (H6), $\distrib{A} \vp \in H(r', n')$; hence, by
  (H2), $\vp \in H(r', n')$, as desired.

  Now, let $\neg \distrib{A} \vp \in H(r, n)$. By (H5), there exist
  $r' \in R$ and $n' \in \nat$ such that $((r, n), (r', n')) \in
  \rel{R}^D_A$ and $\neg \vp \in H(r', n')$.  As $\rel{R}^D_A
  \subseteq \rel{R}'^D_A$, the claim follows from the inductive
  hypothesis.

  Suppose that $\chi = \commonk{A} \vp$.  Assume that $\commonk{A} \vp
  \in H(r, n)$.  By inductive hypothesis, it suffices to show that if
  $(r', n')$ is $A$-reachable from $(r, n)$, then $\vp \in H(r',
  n')$. If $(r,n) = (r', n')$ the claim follows from (H2). So, suppose
  for some $m \geq 1$, there exists a sequence of points $(r, n) =
  (r_0, n_0), \ldots, \linebreak (r_{m-1}, n_{m-1}), (r_m, n_{m}) =
  (r', n')$ such that, for every $0 \leq i < m$, there exists $a_i \in
  A$ such that $((r_i, n_{i}),(r_{i+1}, n_{i+1})) \in
  \rel{R}'^D_{a_i}$. Then, for every $0 \leq i < m$, there exists $a_i
  \in A$ such that $((r_i, n_{i}), (r_{i+1}, n_{i+1})) \in R_{a_i}$.
  We can then show by induction on $i$, using (H2) and (H6), that
  $\commonk{A} \vp \in H(r_i, n_{i})$ holds for every $0 \leq < m$;
  hence, $\distrib{a_{i}} (\vp \con \commonk{A} \vp) \in H(r_i,
  n_{i})$.  Therefore, $\vp \in H(r_{i+1},n_{i+1})$ by (H2) and
  (H6). By taking $i = m-1$ we obtain $\vp \in H(r', n')$, as
  required.

  Now, assume $\neg \commonk{A} \vp \in H(r, n)$.  Then, the claim
  follows from (H7) and the inductive hypothesis, since $\rel{R}^C_A
  \subseteq \rel{R}'^C_A$.

  Suppose that $\chi = \vp \until \psi$. If $\vp \until \psi \in H(r,
  n)$, then the conclusion immediately follows from (H4) and the
  inductive hypothesis.  Suppose, on the other hand, that $\neg (\vp
  \until \psi) \in H(r, n)$.  Then, by (H2), $\neg \psi, \neg \vp \in
  H(r, n)$ or $\neg \psi, \next \neg (\vp \until \psi) \linebreak \in
  H (r, n)$.  In case former case, the inductive hypothesis
  immediately gives us the desired result.  In the latter, inductive
  hypothesis gives us $\notsat{M}{(r, n)}{\psi}$ and (H3) gives us
  $\neg (\vp \until \psi) \in H(r, n+1)$. Now the argument can be
  repeated.  Ultimately, using inductive hypothesis, we either get a
  finite path $(r, n), \ldots, (r, i)$ such that $\notsat{M}{(r,
    i)}{\vp}$ and $\notsat{M}{(r, j)}{\psi}$ holds for all $n \leq j
  \leq i$, or we get an infinite path $(r, n), (r, n+1), \ldots$ such
  that $\notsat{M}{(r, i)}{\psi}$ for all $i \geq n$.  In either case,
  $\notsat{M}{(r, n)}{\psi \until \psi}$.
\end{proof}

To show that satisfiability of a formula in a pseudo-TEM implies its
satisfiability in a TEM, we use a modification of the construction
from~\cite[Appendix A1]{FHV92} (see also~\cite{vdHM92}).

\begin{definition}
  \label{def:max_pahts}
  Let $\mmodel{M} = (\agents, S, R, \set{\rel{R}^D_A}_{A \in
    \powerne{\agents}}, \linebreak \set{\rel{R}^C_A}_{A \in
    \powerne{\agents}}, \ap, L)$ be a (pseudo-)TEM and let $r, r' \in
  R$ and $n, n' \in \nat$.  A \de{maximal path from $(r, n)$ to $(r',
    n')$} in \mmodel{M} is a sequence $(r,n) = (r_0, n_0), A_0, (r_1,
  n_1), \ldots, A_{m-1}, (r_m, n_{m}) \newline = (r', n')$ such that,
  for every $0 \leq i < m$, $((r_i, n_{i}), (r_{i+1}, n_{i+1}))
  \linebreak \in \rel{R}^D_{A_i}$, but $((r_i, n_{i}), (r_{i+1},
  n_{i+1})) \notin \rel{R}^D_{B}$ for any $B$ such that $A_i \subset B
  \subseteq \agents$.  A segment $\rho'$ of a maximal path $\rho$
  starting and ending with a point is a \de{sub-path} of $\rho$.
\end{definition}

\begin{definition}
  \label{def:reduced_paths}
  Let $\rho = (r_0, n_0), A_0 \ldots, A_{n-1}, (r_m, n_m)$ be a maximal
  path in \mmodel{M}.  The \de{reduction} of $\rho$ is obtained by,
  first, replacing in $\rho$ every longest sub-path $(r_p, n_p), A_p ,
  (r_{p+1}, \linebreak n_{p+1}) \ldots, A_{p+q-1}, (r_{p+q}, n_{p+q})$ such that  $r_p = r_{p+1} = \ldots = r_{p+q}$ with $r_p$ (i.e., eliminating
  loops) and, then, by replacing in the resultant path every longest
  sub-path $(r_j, n_j), A_j, (r_{j+1}, \linebreak n_{j+1}) \ldots, A_{j+m-1},
  (r_{j+m}, n_{j+m})$ such that $A_j = A_{j+1} = \ldots = A_{j+m-1}$ with
  $(r_j, n_{j}), A_j, (r_{j+m}, n_{j+m})$ (reducing multiple transitions
  along the same relation into a single transition).  A maximal path
  is \de{reduced} if it equals its reduction.
\end{definition}

\begin{definition}
  \label{def:forest-like-models}
  A (pseudo-)TEM \mmodel{M} is \de{forest-like} if, for every $r, r'
  \in R$ and every $n, n' \in \nat$, there exists at most one reduced
  maximal path from $(r, n)$ to $(r', n')$.
\end{definition}

One difference of the construction presented below from the one
in~\cite[Appendix A1]{FHV92} is that, instead of producing a tree-like
model, we rather produce a forest-like one, partly since every
``temporal level'' of the model we are going to build will not be
connected by epistemic relations to any other temporal level, and
partly because even within a single temporal level we will, in
general, construct more than one ``epistemic tree''.

\begin{lemma}
  \label{lem:psSTEM_to_STEM}
  If $\theta \in \lang$ is satisfiable in a (synchronous) pseudo-TEM,
  then it is satisfiable in a (synchronous) forest-like TEM.
\end{lemma}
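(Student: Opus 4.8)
The plan is to unravel the pseudo-TEM into a forest-like structure by taking the points of the new model to be the reduced maximal paths of the original pseudo-TEM, starting from a fixed satisfying point, and to define the epistemic relations so that one step along $\rel{R}^D_A$ corresponds to appending one maximal-path transition labelled by a maximal coalition $\supseteq A$. Concretely, let $\mmodel{M} = (\agents, S, R, \set{\rel{R}^D_A}, \set{\rel{R}^C_A}, \ap, L)$ be a pseudo-TEM with $\sat{M}{(\bar r, \bar n)}{\theta}$. I would first treat each ``temporal level'' separately: fix $n \in \nat$ and consider the reduced maximal paths within the epistemic structure. Since the runs already carry the temporal dimension, and (as noted after Definition~\ref{def:satisfaction}) epistemic relations in the intended models never link different temporal coordinates once we are careful, the construction can be performed levelwise and the results glued along the runs. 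The new set of runs $R'$ will consist of functions built from the reduced maximal paths, with $L'$ inherited pointwise from $L$ via the endpoint of each path.

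The key steps, in order: (1) Define the carrier: points of the forest-like model are reduced maximal paths $\rho$ from $(\bar r, \bar n)$ in $\mmodel{M}$ (levelwise), with the ``state'' of $\rho$ being the state of its last point, and $L'(\rho)$ the restriction of $L$ at that last point. (2) Define $\rel{R}''^D_A$ by: $\rho\, \rel{R}''^D_A\, \rho'$ iff $\rho'$ extends $\rho$ (or vice versa) by one transition along some $\rel{R}^D_B$ with $A \subseteq B$, after re-reduction — then take the reflexive-symmetric-transitive closure to get a genuine equivalence relation, and intersect appropriately to enforce condition ($\dag\dag$) so that we land in a pseudo-TEF; (3) set $\rel{R}''^C_A$ to be the transitive closure of $\bigunion_{a \in A} \rel{R}''^D_{\set a}$, as required. (4) Verify forest-likeness: by construction two distinct reduced maximal paths end at distinct carrier points (they \emph{are} the carrier points), so there is a unique reduced maximal path between any two points — this is essentially the tree-unravelling argument of \cite[Appendix A1]{FHV92} adapted to a forest. (5) Prove the truth lemma: by induction on $\chi$, $\sat{M''}{\rho}{\chi}$ iff $\sat{M}{(\text{last point of }\rho)}{\chi}$; the Boolean and temporal cases are immediate since the run structure (hence $\next$ and $\until$) is copied verbatim, and the epistemic cases use that a one-step $\rel{R}''^D_A$-move projects onto a one-step $\rel{R}^D_B$-move with $A \subseteq B \subseteq \agents$, so $\rel{R}''^D_A$-neighbourhoods project into $\rel{R}^D_A$-neighbourhoods and conversely every $\rel{R}^D_A$-successor in $\mmodel{M}$ is realized by extending the path. (6) Conclude $\sat{M''}{\rho_0}{\theta}$ where $\rho_0$ is the trivial path at $(\bar r, \bar n)$, and check that synchrony is preserved (transitions never change the temporal coordinate, so the levelwise construction keeps $n = n'$ whenever two points are $\rel{R}''^D_A$-related in the synchronous case).

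I expect the main obstacle to be step (2) together with the epistemic direction of step (5): one must define the epistemic relations on reduced maximal paths so that (a) each $\rel{R}''^D_A$ is a genuine equivalence relation, (b) the family satisfies ($\dag\dag$), i.e. $\rel{R}''^D_A \subseteq \rel{R}''^D_B$ whenever $B \subseteq A$, and (c) the projection to $\mmodel{M}$ is tight enough that the $\commonk{A}$ and $\distrib{A}$ clauses go through in \emph{both} directions — in particular, that forcing the relations to be equivalences by taking closures does not create spurious epistemic successors whose last point fails a formula it should satisfy. The resolution is the standard unravelling trick: because we only ever append transitions labelled by \emph{maximal} coalitions and then reduce, the symmetric-transitive closure adds only ``backward'' and ``composite'' steps that still project onto legitimate $\rel{R}^D$-paths in $\mmodel{M}$, so the labels along any reduced maximal path in $\mmodel{M''}$ are recoverable and the truth lemma's epistemic cases reduce to the reachability reformulation of $\commonk{A}$ given after Definition~\ref{def:satisfaction}, exactly as in the argument proving Lemma~\ref{lem:STEHS_to_psSTEM}.

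\hfill$\Box$
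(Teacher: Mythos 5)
Your overall construction is the paper's: unravel levelwise into trees of maximal paths rooted at chosen points, let one step of the new $D$-relation be a one-step extension of a path by a maximal-coalition transition, close up, impose downward closure, label by endpoints, and transfer truth along the endpoint map. However, there is a genuine gap at the very point the lemma exists to address. You only ever verify condition (\dag\dag) --- your step (2) says explicitly that the aim is to ``land in a pseudo-TEF'', and your list of obstacles (a)--(c) likewise stops at (\dag\dag). But the conclusion of the lemma is satisfiability in a \emph{TEM}, whose frame must satisfy the stronger condition (\dag): $\rel{R}^D_A = \biginter_{a \in A} \rel{R}^D_{\set{a}}$ for every $A \in \powerne{\agents}$. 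The input already satisfies (\dag\dag); the whole purpose of passing to a forest-like unravelling is to upgrade (\dag\dag) to (\dag). As written, your argument ends where it began, with a (forest-like) pseudo-TEM. If by ``intersect appropriately'' you meant to \emph{define} $\rel{R}''^D_A$ as $\biginter_{a \in A} \rel{R}''^D_{\set{a}}$, then (\dag) holds by fiat, but you then owe the nontrivial check that this intersection still contains every one-step extension needed to witness formulae $\neg \distrib{A} \vp$, i.e., that the truth lemma survives.

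The missing argument is exactly where forest-likeness is \emph{used} rather than merely advertised. Suppose $(\rho, \tau) \in \rel{R}''^D_{\set{a}}$ for every $a \in A$. Then for each such $a$ there is a reduced maximal path in the new model from $\rho$ to $\tau$ all of whose transition labels contain $a$. Forest-likeness says these paths all coincide, so every coalition labelling that single path contains each $a \in A$, hence contains $A$; the downward closure condition you already imposed then gives $(\rho, \tau) \in \rel{R}''^D_A$, establishing (\dag). (Relatedly, your justification of forest-likeness in step (4) conflates reduced maximal paths in $\mmodel{M}$ with reduced maximal paths in the new model between its points; uniqueness of the latter is what is needed, and it follows from the fact that the one-step extension relations form the edge set of a tree, via the correspondence between reduced maximal paths and non-redundant primitive paths.) With these two pieces supplied, your proof matches the paper's.
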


\begin{proof} We will only consider the synchronous case, as it
  requires extra care.  Suppose that $\theta$ is satisfied in a
  synchronous pseudo-TEM $\mmodel{M} = (\agents, S, R,
  \set{\rel{R}^D_A}_{A \in \powerne{\agents}}, \linebreak
  \set{\rel{R}^C_A}_{A \in \powerne{\agents}}, \ap, L)$ at a point
  $(r, n)$.  To build a synchronous forest-like TEM $\mmodel{M}'$
  satisfying $\theta$, we use the modified tree-unraveling technique.
  First, every ``epistemic tree'' within a temporal level of the model
  will be made up of all \de{maximal} paths, rather than all paths, as
  in the standard tree-unraveling, since we want to ensure that paths
  between points are unique with respect to the relations
  $\rel{R}^D_A$ indexed by maximal coalitions, which will allow us to
  fix ``defects'' with respect to the $D$-relations. Second, every
  level will, in general, be made up of more than one epistemic tree,
  as every point at level $m \ne 0$ created as part of temporal run
  starting at a level $k < m$, will be a root of a separate tree.

  The construction starts by taking a submodel $\mmodel{M}_{(r, n)}$
  of $\mmodel{M}$ generated by the point $x = (r,n)$ at which $\theta$
  is satisfiable.

  Next, we define $\mmodel{M}'$ by recursion on the temporal levels.
  We view a level $k$ as partitioned into clusters \set{S^k_1, S^k_2,
    \ldots}, such that if $((r, k), (r',k)) \in S^k_i$, there is an
  (undirected) path along $D$-relations between $(r, k)$ and $(r',
  k)$.

  We start from level 0, corresponding to level $n$ in $\mmodel{M}$
  and level 0 in $\mmodel{M}_{(r, n)}$.  This level contains only one
  cluster $S^0$, generated by point $x$. In general, however, a level
  $k$ will have more than one cluster, so we describe the construction
  in more general terms.  At level $k$, for each cluster $S^k_i$, we
  choose arbitrarily a point $(r_i, k) \in S^k_i$ (at level $0$,
  however, we choose $x$); this point is going to be the root of an
  epistemic tree associated with that cluster.  Now, we call a maximal
  path $\rho$ in $\mmodel{M}$ a \emph{$(r_i, k)$-max-path} if the
  first component of $\rho$ is $(r_i, k)$.  We denote the last element
  of $\rho$ by $l(\rho)$. Notice that $(r_i, k)$ is by itself an
  $(r_i, k)$-max-path.  Now, let $\widehat{S}^k_i$ be the set of all
  $(r_i, k)$-max-paths in $\mmodel{M}$.  For every $A \in
  \powerne{\agents}$, let $\rel{R}^*{}^D_A = \crh{(\rho, \rho')}
  {\rho, \rho' \in \bigunion_{i}\, \widehat{S}^k_i$ and $\rho' = \rho,
    A, l(\rho')}$. Let, furthermore, $\rel{R}'^D_A$ to be the
  reflexive, symmetric, and transitive closure of
  $\rel{R}^*{}^D_A$. Notice that $(\rho,\rho')\in \rel{R}'^D_A$ holds
  iff one of the paths $\rho$ and $\rho'$ extends the other by a
  sequence of $A$-steps. Therefore, two different states in
  $\bigunion_{i}\,\, \widehat{S}^k_i$ can only connected by
  $\rel{R}'^D_A$ for at most one maximal coalition $A$.  Further, we
  stipulate the following \emph{downwards closure condition}: whenever
  $(\rho, \tau) \in \rel{R}'^D_A$ and $B \subseteq A$, then $(\rho,
  \tau) \in \rel{R}'^D_B$.  The relations $\rel{R}'^C_A$ are then
  defined as in any TEF.

  We next describe how to create level $m+1$ of $\mmodel{M}'$ assuming
  that level $m$ has already been defined.  First, carry out for $m+1$
  the construction described in the previous paragraph for an
  arbitrary level $k$.  Secondly, for every pair of states $\rho \in
  \bigunion_{i}\,\, \widehat{S}^{m}_i$ and $\tau \in \bigunion_{i}\,\,
  \widehat{S}^{m+1}_i$ make $(\tau, m+1)$ a temporal successor of
  $(\rho, m)$ if $l(\tau)$ is a such successor of $l(\rho)$ in
  \mmodel{M}.

  To complete the definition of $\mmodel{M}'$, we put $L'(\rho) =
  L(l(\rho))$ for every $\rho \in P(\mmodel{M}')$, where
  $P(\mmodel{M}')$ is the set of points of $\mmodel{M}'$.  It is clear
  from the construction, namely from the downward saturation condition
  above, that $\mmodel{M}'$ is a synchronous pseudo-TEM.  We now show
  that it is a TEM satisfying $\theta$.

  To prove the first part of the claim, we need extra terminology.  We
  call a maximal path $\rho_1, A_1, \rho_2, \ldots, A_{n-1}, \rho_n$
  in $\mmodel{M}'$ \de{primitive} if, for every $0 \leq i < n$, either
  $(\rho_i, \rho_{i+1}) \in \rel{R}^*{}^D_{A_i}$ or $(\rho_{i+1},
  \rho_i) \in \rel{R}^*{}^D_{A_i}$.  A primitive path $\rho_1, A_1,
  \rho_2, \ldots, \linebreak A_{n-1}, \rho_n$ is \de{non-redundant} if
  there is no $0 \leq i < n$ such that $\rho_i = \rho_{i+2}$ and $A_i
  = A_{i+1}$. Intuitively, in a non-redundant path we never go from a
  state $\rho$ (forward or backward) along a relation and then
  immediately back to $\rho$ along the same relation.  Since the
  relations $\rel{R}^*{}^D_A$ are edges of a tree, it immediately
  follows that ($S'$ denotes the state space of $\mmodel{M}'$):
  \begin{description}
  \item[(\ddag)] for every pair of states $\rho, \tau \in S'$, there
    exists at most one non-redundant primitive path from $\rho$ to
    $\tau$.
  \end{description}
  Lastly, we call a primitive path $\rho_1, A, \rho_2, \ldots, A,
  \rho_n$ an \de{$A$-primitive path}.

  We will now show that maximal reduced paths in $\mmodel{M}'$ stand
  in one-to-one correspondence with non-redundant primitive paths.  It
  will then follow from (\ddag) that maximal reduced paths between any
  two states of $\mmodel{M}'$ are unique, and thus $\mmodel{M}'$ is
  forest-like, as claimed.  Let $P = \rho_1, A_1, \ldots, \linebreak
  A_{n-1}, \rho_n$, where $\rho_1 = \rho$ and $\rho_n = \tau$, be a
  maximal reduced path from $\rho$ to $\tau$ in $\mmodel{M}'$. Since
  $(\rho_i, \rho_{i+1}) \in \rel{R}'^D_{A_i}$, there exists a
  non-redundant $A_i$-primitive path from $\rho_i$ to $\rho_{i+1}$,
  which in view of (\ddag) is unique. Let us obtain a path $P'$ from
  $\rho$ to $\tau$ by replacing in $\rho$ every link $(\rho_i, A_i,
  \rho_{i+1})$ by the corresponding non-redundant $A_i$-primitive path
  from $\rho_i$ to $\rho_{i+1}$.  Call $P'$ an \de{expansion} of $P$.
  In view of (\ddag), every path has a unique expansion.  Now, it is
  easy to see that $P$ is a reduction of $P'$.  Since the reduction of
  a given path is unique, too, it follows that there exists a
  one-to-one correspondence between reduced paths and non-redundant
  primitive paths in $\mmodel{M}'$.

  We now prove that $\rel{R}'^D_A = \biginter_{a \in A} \rel{R}'^D_a$
  for every $A \in \powerne{\agents}$, and hence $\mmodel{M}'$ is a
  TEM.  The left to right inclusion is immediate, as $\mmodel{M}'$ is
  pseudo-TEM.  For the other direction, assume that $((r, n), (r', n))
  \in \rel{R}'^D_a$ holds for every $a \in A$.  Then, for every $a \in
  A$, there exists a path, and therefore a maximal reduced path, from
  $(r, n)$ to $(r', n)$ along relations $\rel{R}'{}^D_{A'}$ such that
  $a \in A'$.  As $\mmodel{M}'$ is forest-like, there is only one
  maximal reduced path from $(r, n)$ to $(r', n)$.  Therefore, the
  relations $\rel{R}^D_{A'}$ linking $(r, n)$ to $(r', n)$ along this
  path are such that $A \subseteq A'$ for every $A'$.  Then, by the
  downwards closure condition, there is a path from $(r, n)$ to $(r',
  n)$ along the relation $\rel{R}'^D_A$ and, hence, $((r, n), (r', n))
  \in \rel{R}'^D_A$, as desired.

  Finally, it remains to prove that $\mmodel{M}'$ satisfies $\theta$.
  First, notice that $(\rho, \tau) \in \rel{R}'_A$ iff there exists an
  $A$-primitive path from $\rho$ to $\tau$. Hence, as every
  $\rel{R}_A$ is an equivalence relation, if $(\rho, \tau) \in
  \rel{R}'_A$, then $(l(\rho), l(\tau)) \in \rel{R}'_A$. It is now
  straightforward to check that the relation $Z = \crh{(\rho,
    l(\rho)}{\rho \in S'}$ is a bisimulation between $\mmodel{M}'$ and
  $\mmodel{M}$.  Since $(x, l(x)) \in Z$, it follows that
  \sat{M'}{x}{\theta}, and we are done.
\end{proof}

\begin{theorem}
  \label{thr:models_equal_hintikka}
  Let $\theta \in \lang$.  Then, $\theta$ is satisfiable in a TEM iff
  there exists a TEHS satisfying $\theta$.
\end{theorem}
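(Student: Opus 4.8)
The plan is to derive the theorem as an immediate consequence of the three lemmas already proved in this section, chaining them in the appropriate order. The left-to-right implication will be handled by Lemma~\ref{lem:form_models_to_hintikka} directly, while the right-to-left implication will be obtained by first passing from a TEHS to a pseudo-TEM via Lemma~\ref{lem:STEHS_to_psSTEM}, and then from a pseudo-TEM to a genuine (forest-like) TEM via Lemma~\ref{lem:psSTEM_to_STEM}. Thus essentially no new combinatorial work is required; the theorem is a bookkeeping statement that records that the three constructions compose.

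For the forward direction, I would start with a TEM $\mmodel{M} = (\kframe{F}, L)$ and a point $(r,n)$ with $\sat{M}{(r,n)}{\theta}$. Form the extended labeling $L^+$ on the points of $\mmodel{M}$ by $L^+(r',n') = \crh{\vp}{\sat{M}{(r',n')}{\vp}}$, and apply Lemma~\ref{lem:form_models_to_hintikka} to conclude that $\hintikka{H} = (\kframe{F}, \ap, L^+)$ is a TEHS; since $\theta$ holds at $(r,n)$ in $\mmodel{M}$, we have $\theta \in L^+(r,n)$, so $\hintikka{H}$ satisfies $\theta$ in the sense of Definition~\ref{def:sat_in_hintikka}. (In the synchronous case one notes that passing to $L^+$ does not alter the underlying TES, so synchrony is preserved.)

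For the backward direction, suppose $\hintikka{H}$ is a TEHS satisfying $\theta$, i.e.\ $\theta \in H(r,n)$ for some point $(r,n)$. By Lemma~\ref{lem:STEHS_to_psSTEM} there is a pseudo-TEM satisfying $\theta$, and by Lemma~\ref{lem:psSTEM_to_STEM} satisfiability in a pseudo-TEM transfers to satisfiability in a (forest-like) TEM. Composing these yields a TEM satisfying $\theta$, which closes the equivalence. I would also remark that both intermediate constructions preserve synchrony, so the synchronous version of the theorem follows by exactly the same chain.

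There is no real obstacle at this stage: all the difficulty has already been absorbed into Lemma~\ref{lem:psSTEM_to_STEM} (the modified tree/forest-unraveling that repairs the defects of the $D$-relations while keeping synchrony) and, to a lesser extent, into the inductive truth lemma inside Lemma~\ref{lem:STEHS_to_psSTEM}. The only point that deserves an explicit sentence is that ``satisfiable in a TEHS'' (Definition~\ref{def:sat_in_hintikka}) unpacks to membership $\theta \in H(r,n)$, which is precisely the hypothesis fed into Lemma~\ref{lem:STEHS_to_psSTEM} and precisely the conclusion delivered by Lemma~\ref{lem:form_models_to_hintikka}; once this is noted, the proof is a two-line citation of the three lemmas.

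\hfill$\Box$
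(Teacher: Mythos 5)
Your proposal is correct and follows exactly the paper's own argument: the paper proves this theorem as an immediate consequence of Lemmas~\ref{lem:form_models_to_hintikka}, \ref{lem:STEHS_to_psSTEM} and \ref{lem:psSTEM_to_STEM}, chained precisely as you describe. Your version merely spells out the bookkeeping (unpacking Definition~\ref{def:sat_in_hintikka} and noting synchrony preservation) that the paper leaves implicit.
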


\begin{proof}
  Immediate from Lemmas~\ref{lem:form_models_to_hintikka},
  \ref{lem:STEHS_to_psSTEM} and \ref{lem:psSTEM_to_STEM}.
\end{proof}

\section{Tableaux for \\ CMATEL(CD + LT)}
\label{sec:tableaux}

In the present section, we describe the tableau procedure for testing
formulae of \CMATELCDLT\ for satisfiability in synchronous systems, as
this case requires more care.  We then briefly mention how to modify
the procedure for asynchronous case and argue the the output of both
procedures for the same input formula is the same, implying the
equivalence of two semantics.

\subsection{Overview of the tableau procedure}

The tableau procedure for testing a formula $\theta \in \lang$ for
satisfiability attempts to construct a non-empty graph
$\tableau{T}^{\theta}$ (called \fm{tableau}), whose nodes are finite
subsets of \lang, representing \emph{sufficiently many} TEHSs, in the
sense that, if $\theta$ is satisfiable in a TEHS, it is satisfiable
in a one represented by a tableau for $\theta$. The philosophy
underlying our tableau algorithm is essentially the same as the one
underpinning the tableau procedure for \LTL\ from~\cite{Wolper85},
recently adapted to multiagent epistemic logics in~\cite{GorSh09}; 
this philosophy can be traced back to~\cite{Pratt80}. To make the
present paper self-contained, we outline the basic ideas behind our
tableau algorithm in line with those references. The particulars of
the tableaux presented here, however, are specific to \CMATELCDLT.

Usually, tableaux work by decomposing the input formula into simpler
formulae. In the classical propositional case, ``simpler'' implies
shorter, thus ensuring the termination of the procedure. The
decomposition into simpler formulae in the tableau for classical
propositional logic produces a tree representing an exhaustive search
for a Hintikka set (the classical analogue of Hintikka structures) for
the input formula $\theta$.  If at least one leaf of that tree
produces a Hintikka set for $\theta$, the search has succeeded and
$\theta$ is pronounced satisfiable; otherwise it is declared
unsatisfiable.

When applied to logics containing fixpoint-definable operators, such
as $\commonk{A}$ and $\until$, these two defining features of the
classical tableau method no longer apply. First, the decomposition of
fixpoint formulae, which is done by unfolding their fixpoint
definitions, produces larger formulae: $\commonk{A} \vp$ is decomposed
into formulae of the form $\distrib{a} (\vp \con \commonk{A} \vp)$,
while $\vp \until \psi$ is decomposed into $\psi$ and $\vp \con \next
(\vp \until \psi)$. Hence, we need a termination-ensuring
mechanism. In our tableaux, such a mechanism is provided by the use
(and reuse) of so called ``prestates'', whose role is to ensure the
finiteness of the construction and, hence, termination of the
procedure. Second, the only reason why a tableau may fail to produce a
Hintikka set for the input formula in the classical case is that every
attempt to build such a set results in a collection of formulae
containing a \emph{patent inconsistency}, i.e., a complementary pair
of formulae $\vp, \neg \vp$.  In the case of \CMATELCDLT, there are
other such reasons, as the tableaux in this case are meant to
represent TEHSs, which are more involved structures than classical
Hintikka sets.  One additional reason has to do with eventualities:
the presence of an eventuality $\neg \commonk{A} \vp$ in the label of
a state $s$ of a TEHS \hintikka{H} requires the existence in
\hintikka{H} of an $A$-path from $s$ to a state $t$ whose label
contains $\neg \vp$ (condition (H7) of Definition~\ref{def:hs}). An
analogous requirement applies to eventualities of the form $\vp \until
\psi$ due to condition (H4) of Definition~\ref{def:hs}.  The tableau
analogs of these conditions is called \emph{realization of
  eventualities}.  If a tableau contains nodes with unrealized
eventualities, then it cannot produce a TEHS, and thus it is
``bad''. The third possible reason for a tableau to be ``bad'' has to
do with successor nodes: it may so happen that some of the successors
of a node $s$ which are necessary for the satisfaction of $s$ are
unsatisfiable.  Notice that TEHSs, and consequently the associated
tableaux, contain two kinds of ``successor'' nodes: temporal and
epistemic.  The non-satisfiability of either kind of successor can
ruin the chances of a tableau node to correspond to a state of a TEHS.

The tableau procedure consists of three major phases: \fm{pretableau
  construction}, \fm{prestate elimination}, and \fm{state
  elimination}.  During the pretableau construction phase, we produce
a directed graph $\tableau{P}^{\theta}$---called the \emph{pretableau}
for $\theta$---whose set of nodes properly contains the set of nodes
of the tableau $\tableau{T}^{\theta}$ we are building.  The nodes of
$\tableau{P}^{\theta}$ are sets of formulae of two kinds: \fm{states}
and \fm{prestates}. States are fully expanded sets, meant to represent
(labels of) states of a Hintikka structure, while prestates play a
temporary role in the construction of $\tableau{T}^{\theta}$.  During
the prestate elimination phase, we create a smaller graph
$\tableau{T}_0^{\theta}$ out of $\tableau{P}^{\theta}$, called the
\fm{initial tableau for $\theta$}, by eliminating all the prestates
from $\tableau{P}^{\theta}$ and accordingly redirecting its edges.
Finally, during the state elimination phase, we remove from
$\tableau{T}_0^{\theta}$ all the states, if any, that cannot be
satisfied in a TEHS, either because they contain unrealized
eventualities or because they lack a necessary successor (patently
inconsistent states are removed ``on the fly'' during the state
creation stage). The elimination procedure results in a (possibly
empty) subgraph $\tableau{T}^{\theta}$ of $\tableau{T}_0^{\theta}$,
called the \de{final tableau for $\theta$}. If some state $\Delta$ of
$\tableau{T}^{\theta}$ contains $\theta$, we declare $\theta$
satisfiable; otherwise, we declare it unsatisfiable. The construction
of the tableau is illustrated in Example~\ref{TableauExample} given at
the end of Section~\ref{sec:state_elimination}.

\subsection{Pretableau construction phase}
\label{sec:construction}

All states and prestates of the pretableau $\tableau{P}^{\theta}$
constructed during this phase are ``time-stamped'', the notation
$\G^{[n]}$ indicating that prestate $\G$ was created as the $n$th
component of a run; analogously for states.

The pretableau contains three types of edge, described below.  As
already mentioned, a tableau attempts to produce a compact
representation of a sufficient number of TEHSs for the input formula,
which are the result of an exhaustive search for a TEHS satisfying
$\theta$. One type of edge, depicted by unmarked double arrows
$\brancharrow$, represents the search dimension of the tableau.
Exhaustive search considers all possible alternatives, which arise
when expanding prestates into states by branching when dealing with
the ``disjunctive formulae''. Thus, when we draw a double arrow from a
prestate \G\ to states $\D$ and $\D'$ (depicted as $\G \brancharrow
\D$ and $\G \brancharrow \D'$, respectively), this intuitively means
that, in any TEHS, a state whose label extends the set \G\ has to
contain at least one of $\D$ and $\D'$. Our first construction rule,
\Rule{SR}, prescribes how to create tableau states from prestates.

Given a set $\G \subseteq \lang$, we say that $\D$ is a \de{minimal
fully expanded extension of \G} if $\D$ is fully expanded, $\G
\subseteq \D$, and there is no $\D'$ such that $\G \subseteq \D'
\subset \D$ and $\D'$ is fully expanded.

\smallskip

\textbf{Rule} \Rule{SR} Given a prestate $\G^{[n]}$ such that \Rule{SR}
has not been applied to \Rule{SR} earlier, do the following:
\begin{enumerate}
  \itemsep-4pt
\item Add all minimal fully expanded extensions $\D^{[n]}$ of
  $\G^{[n]}$ that are not patently inconsistent as \de{states};
\item if $\D^{[n]}$ contains no formulae $\next \vp$, add
  $\next \truth$ to it;
\item for each so obtained state $\D^{[n]}$, put $\G^{[n]} \brancharrow
  \D^{[n]}$;
\item if, however, the pretableau already contains a state $\D'^{[m]}$
  that coincides with $\D^{[n]}$, do not create another copy of
  $\D'^{[m]}$, but only put $\G^{[n]} \brancharrow \D'^{[m]}$.
\end{enumerate}

We denote by $\st{\G^{[n]}}$ the  set of states
\crh{\D}{\G^{[n]} \brancharrow \D}. Note that we remove patently
inconsistent states ``on the fly'', thus never making them part of a
pretableau.

Notice that in all construction rules, as in \Rule{SR}, we allow reuse
of (pre)states, which were originally stamped with a possibly
different time-stamp. This does not correspond to one state or
prestate being part of two different runs, at different moments of
time (the absolute time is supposed to be the same in all runs, even
though agents may not be able to observe it, in asynchronous systems);
rather, the ``futures'' of these runs, starting from the reused
(pre)state can be assumed to be identical, modulo the time difference.

The second type of edge in a pretableau represents epistemic relations
in the TEHSs that the procedure attempts to build.  This type of edge
is represented by single arrows marked with epistemic formulae whose
presence in the source state requires the presence in the tableau of a
target state, reachable by a particular epistemic relation.  All such
formulae have the form $\neg \distrib{A} \vp$ (as can be seen from
Definition~\ref{def:hs}).  Intuitively if, say $\neg \distrib{A} \vp
\in \D^{[n]}$, then we need some prestate $\G^{[n]}$ containing $\neg
\vp$ to be accessible from $\D^{[n]}$ by $\rel{R}^D_A$ (notice that
the newly created prestates bear the same time stamp as the source
state; this reflects the fact that we are considering the synchronous
case).  The reason we mark these single arrows not just by a coalition
$A$, but by a formula $\neg \distrib{A} \vp$, is that we have to
remember not just what relation connects states whose labels contain
$\D^{[n]}$ and $\G^{[n]}$, but why we had to create this particular
$\G^{[n]}$.  This information will be needed when we start eliminating
prestates, and then states.  We now formulate the rule producing this
second type of edges in the pretableau.

\smallskip

\textbf{Rule} \Rule{DR}: Given a state $\D^{[n]}$ such that $\neg
\distrib{A} \vp \in \D^{[n]}$, $\D^{[n]}$ and \Rule{DR} has not been
applied to $\D^{[n]}$ earlier, do the following:
\begin{enumerate}
  \itemsep-4pt
\item Create a new prestate $\G^{[n]} = \set{\neg \vp} \union
  \bigunion_{A' \subseteq A} \crh{\distrib{A'} \psi}{\distrib{A'} \psi
    \in \D^{[n]}} \union \bigunion_{A' \subseteq A} \crh{\neg
    \distrib{A'} \psi}{\neg \distrib{A'} \psi \in \D^{[n]}}$;
\item connect $\D^{[n]}$ to $\G^{[n]}$ with $\stackrel{\neg \distrib{A}
    \vp}{\longrightarrow}$;
\item if, however, the tableau already contains a prestate $\G'^{[n]} =
  \G^{[n]}$, do not add another copy of $\G'^{[n]}$, but simply connect
  $\D^{[n]}$ to $\G'^{[n]}$ with $\stackrel{\neg \distrib{A}
    \vp}{\longrightarrow}$.
\end{enumerate}

Lastly, the third type of edge, depicted by single unmarked arrow
$\longrightarrow$, represents temporal transitions.
We now state the rule that creates such arrows.

\smallskip

\textbf{Rule} \Rule{Next}: Given a state $\D^{[n]}$ such that
\Rule{Next} has not been applied to $\D^{[n]}$ earlier, do the
following:
\begin{enumerate}
  \itemsep-4pt
\item Create a new prestate $\G^{[n+1]} = \crh{\vp}{\next \vp \in
    \D^{[n]}}$;
\item connect $\D^{[n]}$ to $\G^{[n+1]}$ with $\longrightarrow$;
\item if, however, the tableau already contains a prestate $\G'^{[m]}
  = \G^{[n+1]}$, do not add another copy of $\G'^{[m]}$, but
  simply connect $\D^{[n]}$ to $\G'^{[m]}$ with $\longrightarrow$.
\end{enumerate}

Note that, due to step 2 in \Rule{SR}, every state contains at least
one formula of the form $\next \vp$.

Having stated the rules, we now describe how the construction phase
works.  We start off by creating a single prestate \set{\theta},
where $\theta$ is the input formula.  Then we alternatingly apply
\Rule{DR} and \Rule{Next} to the prestates created at the previous
stage and then applying \Rule{SR} to the newly created states.
The construction state is over when the applications of \Rule{DR}
and \Rule{Next} do not produce any new prestates.

\subsection{Prestate elimination phase}

At this phase we remove from $\tableau{P}^{\theta}$ all the prestates
and double arrows, by applying the following rule:

\smallskip

\textbf{Rule} \Rule{PR} For every prestate $\G$ in
$\tableau{P}^{\theta}$, do the following:

\begin{enumerate}
  \itemsep-2pt
\item Remove $\G$ from $\tableau{P}^{\theta}$;
\item if there is a state $\D$ in $\tableau{P}^{\theta}$ with $\D
  \stackrel{\chi}{\longrightarrow} \G$, then for every state $\D' \in
  \st{\G}$, put $\D \stackrel{\chi}{\longrightarrow} \D'$;
\item if there is a state $\D$ in $\tableau{P}^{\theta}$ with $\D
  \longrightarrow \G$, then for every state $\D' \in \st{\G}$, put $\D
  \longrightarrow \D'$.
\end{enumerate}

The resulting graph, denoted $\tableau{T}_0^{\theta}$, is called the
\emph{initial tableau}.

\subsection{State elimination phase}
\label{sec:state_elimination}

During this phase we remove from $\tableau{T}_0^{\theta}$ states that
are not satisfiable in a TEHS.  There are two reasons why a state $\D$
of $\tableau{T}_0^{\theta}$ can turn out to be unsatisfiable: either
satisfiability of $\D$ requires satisfiability of some other
(epistemic or temporal) successor states which are unsatisfiable, or
$\D$ contains an eventuality that is not realized in the
tableau. Accordingly, we have three elimination rules (as two
different rules deal with epistemic and temporal successors):
\Rule{E1E}, \Rule{E1T}, and \Rule{E2}.

Formally, the state elimination phase is divided into stages; we start
at stage 0 with $\tableau{T}_0^{\theta}$; at stage $n+1$ we remove
from the tableau $\tableau{T}_n^{\theta}$ obtained at the previous
stage exactly one state, by applying one of the elimination rules,
thus obtaining the tableau $\tableau{T}_{n+1}^{\theta}$. We state the
rules below, where $S_m^{\theta}$ denotes the set of states of
$\tableau{T}_{m}^{\theta}$.

\smallskip

\Rule{E1E} If $\D \in S^{\theta}_n$ contains a formula $\chi = \neg
\distrib{A} \vp$ and $\D \stackrel{\chi}{\longrightarrow} \D'$ does
not hold for any $\D' \in S_n^{\theta}$, obtain
$\tableau{T}_{n+1}^{\theta}$ by eliminating $\D$ from
$\tableau{T}_n^{\theta}$.

\smallskip

\Rule{E1T} If If $\D \in S^{\theta}_n$ and $\D \longrightarrow \D'$
does not hold for any $\D' \in S^{\theta}_n$, obtain
$\tableau{T}_{n+1}^{\theta}$ by eliminating $\D$ from
$\tableau{T}_n^{\theta}$.

\smallskip

For the third elimination rule, we need the concept of
\emph{eventuality realization}.  We say that the eventuality $\neg
\commonk{A} \vp$ is realized at $\D$ in $\tableau{T}^{\theta}_n$ if
there exists a finite path $\D = \D_0, \D_1, \ldots, \D_m$ (where $m
\geq 0$) such that $\neg \vp \in \D_m$ and for every $0 \leq i < m$
there exist $\chi_i = \distrib{B} \psi_i$ such that $B \subseteq A$
and $\D_i \stackrel{\chi_i}{\longrightarrow} \D_{i+1}$. Analogously,
we say that the eventuality $\vp \until \psi$ is realized at $\D$ in
$\tableau{T}^{\theta}_n$ if there exists a finite path $\D = \D_0,
\D_1, \ldots, \D_m$ (where $m \geq 0$) such that $\psi \in \D_m$, and
for every $0 \leq i < m$, both $\D_i \longrightarrow \D_{i+1}$ and
$\vp \in \D_i$ hold.

\smallskip

\Rule{E2} If $\D \in S_n^{\theta}$ contains a (temporal or epistemic)
eventuality $\xi$ that is not realized at $\D$ in
$\tableau{T}_n^{\theta}$, then obtain $\tableau{T}_{n+1}^{\theta}$ by
removing $\D$ from $\tableau{T}_n^{\theta}$.

\smallskip

We check for realization of eventualities by running the following
iterative procedure that eventually marks all states that realize a
given eventuality $\xi$ in $\tableau{T}_n^{\theta}$.  If $\xi = \neg
\commonk{A} \vp$, then initially, we mark all $\D \in S_n^{\theta}$
such that $\neg \vp \in \D$. Then, we repeat the following procedure
until no more states get marked: for every still unmarked $\D \in
S_n^{\theta}$, mark $\D$ if there is at least one $\D'$ such that $\D
\stackrel{\distrib{B} \psi}{\longrightarrow} \D'$ for some $B\subseteq
A$ and $\D'$ is marked.  The procedure for eventualities of the form
$\vp \until \psi$ is analogous.

We have so far described individual rules and their implementation; to
describe the state elimination phase as a whole, we need to specify
the order of their application. We need to be careful, as having
applied \Rule{E2}, we could have removed all the states accessible
from some $\D$ either along the arrows marked with an epistemic
formula $\chi$ or along unmarked arrows $\longrightarrow$; hence, we
need to reapply \Rule{E1E} and \Rule{E1T} to the resultant tableau to
remove such $\D$'s. Conversely, having applied \Rule{E1E} and
\Rule{E1T}, we could have thrown away some states that were needed for
realizing certain eventualities; hence, we need to reapply
\Rule{E2}. Therefore, we need to apply \Rule{E2}, \Rule{E1E}, and
\Rule{E1T} in a dovetailed sequence that cycles through all the
eventualities.  More precisely, we arrange all eventualities occurring
in $\tableau{T}_0^{\theta}$ in a list $\xi_1, \ldots, \xi_m$.  Then,
we proceed in cycles. Each cycle consists of alternatingly applying
\Rule{E2} to the pending eventuality (starting with $\xi_1$), and then
applying \Rule{E1E} and \Rule{E1T} to the resulting tableau, until all
the eventualities have been dealt with, i.e., we reached $\xi_m$.
These cycles are repeated until no state is removed in a whole
cycle. Then, the state elimination phase is over.

The graph produced at the end of the state elimination phase is called
the \fm{final tableau for $\theta$}, denoted by $\tableau{T}^{\theta}$
and its set of states is denoted by $S^{\theta}$.

\begin{definition}
  The final tableau $\tableau{T}^{\theta}$ is \de{open} if $\theta \in
  \D$ for some $\D \in S^{\theta}$; otherwise, $\tableau{T}^{\theta}$
  is \de{closed}.
\end{definition}

The tableau procedure returns ``no'' if the final tableau is closed;
otherwise, it returns ``yes'' and, moreover, provides sufficient
information for producing a finite pseudo-model satisfying $\theta$;
that construction is sketched in Section
\ref{sec:soundness_completeness}.

\begin{example}
\label{TableauExample}
In this example, we show how our procedure works on the formula $\neg
\commonk{\set{a,b}} p \until\, \distrib {\set{a,c}} p$.  Below is the
complete pretableau for this formula.

\medskip

  \begin{picture}(180,145)(-70,230)
   \footnotesize
    \thicklines

    \put(30,375){\makebox(0,0){
        {$\G^{[0]}_0$}
      }}
    
    \put(16,364){\line(-1,-1){10}}
    \put(17.5,364){\line(-1,-1){10}}
    \put(7.75,355){\vector(-1,-1){5}}

    \put(3,342){\makebox(0,0){
        {$\D^{[0]}_1$ }
      }}

    \put(-2,338){\vector(0,-1){13}}
    \put(-6,334){\makebox(0,0){
        {\tiny $\chi_1$ }
      }}

  \qbezier(-7, 342)(-12, 355)(20, 370)
  \put(17,368){\vector(1,1){5}}

  \put(24,364){\line(0,-1){10}}
  \put(25.25,364){\line(0,-1){10}}
  \put(24.75,355){\vector(0,-1){5}}
  
  \put(29,342){\makebox(0,0){
      {$\D^{[0]}_2$ }
    }}

  \qbezier(21, 337)(-130, 262)(17, 262)
  \put(12,262){\vector(1, 0){5}}

  \put(34,364){\line(1,-1){10}}
  \put(35.5,364){\line(1,-1){10}}
  \put(43.75,355){\vector(1,-1){5}}
  
    \put(50,342){\makebox(0,0){
        {$\D^{[0]}_3$ }
      }}

    \put(44,338){\vector(0,-1){13}}
    \put(40,334){\makebox(0,0){
        {\tiny $\chi_2$ }
      }}
    
    \qbezier(57, 342)(65, 353)(37, 369)
    \put(38,368){\vector(-1, 1){5}}

    
    \put(3,320){\makebox(0,0){
        {$\G^{[0]}_1$}
      }}

    \put(-9,312){\line(-1,-1){10}}
    \put(-10.5,312){\line(-1,-1){10}}
    \put(-18.75,303){\vector(-1,-1){5}}
    
    \put(-24,288){\makebox(0,0){
        {$\D^{[0]}_4$ }
      }}

    \qbezier(-34, 290)(-39, 303)(-7, 318)
    \put(-10,316){\vector(1,1){5}}

    \qbezier(-30, 283)(0, 262)(15, 265)
    \put(10,265){\vector(1, 0){5}}
    \put(-38, 293){\makebox(0,0){
        {\tiny $\chi_1$ }
      }}
  
    \put(-3.5,312){\line(0,-1){10}}
    \put(-2,312){\line(0,-1){10}}
    \put(-2.75,303){\vector(0,-1){5}} 
  
  \put(2,288){\makebox(0,0){
      {$\D^{[0]}_5$ }
    }}

  \qbezier(9, 290)(15, 298)(3, 310)
  \put(5,308){\vector(-1, 1){5}}
  \put(17, 293){\makebox(0,0){
      {\tiny $\chi_1$ }
    }}

  \put(7,312){\line(1,-1){10}}
  \put(8.5,312){\line(1,-1){10}}
  \put(16.75,303){\vector(1,-1){5}}
  
\cut{  \put(28, 288){\makebox(0,0){
        {$\D^{[0]}_6$ }
      }}
}


    \put(50,320){\makebox(0,0){
        {$\G^{[0]}_2$}
      }}

    \put(42,312){\line(-1,-1){10}}
    \put(43.5,312){\line(-1,-1){10}}
    \put(33.75,303){\vector(-1,-1){5}}

    \put(46,312){\line(0,-1){10}}
    \put(47.25,312){\line(0,-1){10}}
    \put(46.75,303){\vector(0,-1){5}}

    \put(57.5,312){\line(1,-1){10}}
    \put(56,312){\line(1,-1){10}}
    \put(65.75,303){\vector(1,-1){5}}

  \put(5, 283){\vector(1, -1){12}}

  \put(28, 288){\makebox(0,0){
        {$\D^{[0]}_6$ }
      }}

    \put(23, 283){\vector(0, -1){12}}

    \qbezier(23, 300)(26, 317)(9, 318)
    \put(13,317.5){\vector(-1, 0){5}}
    \put(20, 319){\makebox(0,0){
        {\tiny $\chi_1$ }
      }}

    \qbezier(26, 300)(23, 317)(40, 318)
    \put(36,317.5){\vector(1, 0){5}}
    \put(31, 319){\makebox(0,0){
        {\tiny $\chi_2$ }
      }}

  \put(48,288){\makebox(0,0){
      {$\D^{[0]}_7$ }
    }}

  \qbezier(55, 290)(63, 298)(51, 310)
  \put(53,308){\vector(-1, 1){5}}
  \put(64, 295){\makebox(0,0){
      {\tiny $\chi_2$ }
    }}

    \put(39, 283){\vector(-1, -1){12}}

  \put(70, 288){\makebox(0,0){
        {$\D^{[0]}_8$ }
      }}

    \qbezier(66, 281)(63, 265)(40, 265)
    \put(45,265){\vector(-1, 0){5}}

    \qbezier(77, 288)(86, 299)(58, 315)
    \put(59,314){\vector(-1, 1){5}}
    \put(85.5, 295){\makebox(0,0){
        {\tiny $\chi_2$ }
    }}


    \put(28, 265){\makebox(0,0){
        {$\G^{[1]}_3$ }
      }}

    \put(21,260){\line(0,-1){10}}
    \put(22.25,260){\line(0,-1){10}}
    \put(21.75,251){\vector(0,-1){5}}

    \put(28,240){\makebox(0,0){
        {$\D^{[0]}_9$ }
    }}

  \qbezier(35, 244)(47, 252)(35, 264)
  \put(38,262){\vector(-1, 1){5}}
  \end{picture}

  {\footnotesize $\chi_1 = \neg \distrib{a} (p \con
    \commonk{\set{a,b}} p)$; $\chi_2 = \neg \distrib{b} (p \con
    \commonk{\set{a,b}} p)$;

    $\G_0 = \set{\neg \commonk{\set{a,b}} p \until\, \distrib{\set{a,c}}
      p = \theta}$;

    $\D_1 = \set{\theta, \neg \commonk{\set{a,b}} p, \next \theta,
      \chi_1}$; $\D_2 = \set{\distrib{\set{a,c}} p, p, \next \truth
    }$;

    $\D_3 = \set{\theta, \neg \commonk{\set{a,b}} p, \next \theta,
      \chi_2}$; $\G_1 = \set{\chi_1, \neg (p \con \commonk{\set{a,c}}
      p)}$;

    $\G_2 = \set{\chi_2, \neg (p \con \commonk{\set{a,c}} p)}$; $\D_4
    = \set{\chi_1, \neg p, \next \top}$;

    $\D_5 = \set{\chi_1, \neg \commonk{\set{a, b}} p, \next \top}$;
    $\D_6 = \set{\chi_1, \neg \commonk{\set{a, b}} p, \chi_2, \next
      \top}$;

    $\D_7 = \set{\chi_2, \neg \commonk{\set{a,c}} p, \next \top}$;
    $\D_8 = \set{\chi_2, \neg p, \next \top}$;

    $\G_3 = \set{\truth}; \D_9 = \set{\truth, \next \truth}$.}

  \smallskip

  The initial tableau is obtained by removing all prestates (the
  $\G$s) and redirecting the arrows (i.e, $\D_1$ will be connected by
  unmarked single arrows to itself, $\D_2$, and $\D_3$).  It is easy
  to check that no states get removed during the state elimination
  stage; hence, the tableau is open and $\theta$ is satisfiable.
\end{example}

We now briefly mention how to modify the above procedure for the
asynchronous case.  The only difference occurs in the \Rule{DR} rule:
we now longer require that prestates produced during the application
of this rule to a given state $\D^{[n]}$ should have the same time
stamp as $\D$ (namely, $n$).  A brief analysis of the procedure shows
that this modification does not change the outcome of the procedure
for a given formula.  This, in particular, implies that the
satisfiability-wise equivalence of synchronous and asynchronous
semantics.

\section{Soundness, completeness,  \\ and  complexity}
\label{sec:soundness_completeness}

The \emph{soundness} of a tableau procedure amounts to claiming that
if the input formula $\theta$ is satisfiable, then the tableau for
$\theta$ is open.  To establish soundness of the overall procedure, we
use a series of lemmas showing that every rule by itself is sound; the
soundness of the overall procedure is then an easy consequence. The
proofs of the following three lemmas are straightforward.

\begin{lemma}
  \label{lm:expansion}
  Let $\G$ be a prestate of $\tableau{P}^{\theta}$ such that
  \sat{M}{(r,n)}{\G} for some TEM \mmodel{M} and point $(r,n)$.
  Then, \sat{M}{(r,n)}{\D} holds for at least one $\D \in \st{\G}$.
\end{lemma}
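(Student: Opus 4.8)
By rule \Rule{SR}, the set $\st{\G}$ is exactly the collection of minimal fully expanded, patently consistent extensions of $\G$, each possibly augmented with $\next\truth$. So it suffices to exhibit one minimal fully expanded extension $\D$ of $\G$ all of whose members are true at $(r,n)$: such a $\D$ cannot contain a complementary pair $\vp,\neg\vp$ (both would then hold at $(r,n)$), and after the optional addition of $\next\truth$ — harmless, since $\truth$ holds at $(r,n+1)$ — it is one of the states in $\st{\G}$, satisfied at $(r,n)$.

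The plan is to first saturate $\G$ semantically. Starting from $\D_0 := \G$, as long as the current set $\D_i$ violates some clause of Definition~\ref{def:fully_expanded}, I pick a violated clause and add the formula(s) it demands, obtaining $\D_{i+1}$; for the non-deterministic clauses (items 3, 5, 6, 10, and the choice in item 11) I add a disjunct that is true at $(r,n)$. The key point is that the satisfaction clauses of Definition~\ref{def:satisfaction} validate every expansion step: e.g.\ if $\sat{M}{(r,n)}{\vp\until\psi}$ then $\sat{M}{(r,n)}{\psi}$ or $\sat{M}{(r,n)}{\vp\con\next(\vp\until\psi)}$; if $\sat{M}{(r,n)}{\neg(\vp\until\psi)}$ then $\sat{M}{(r,n)}{\neg\psi}$ together with $\sat{M}{(r,n)}{\neg\vp}$ or $\sat{M}{(r,n)}{\neg\next(\vp\until\psi)}$; if $\sat{M}{(r,n)}{\commonk{A}\vp}$ then $\sat{M}{(r,n)}{\distrib{a}(\vp\con\commonk{A}\vp)}$ for every $a\in A$ (using $\rel{R}^D_{\set{a}}\subseteq\rel{R}^C_A$ and transitivity and reflexivity of $\rel{R}^C_A$), and dually for $\neg\commonk{A}\vp$; the clauses for items 7 and 8 follow from $\rel{R}^D_{A'}\subseteq\rel{R}^D_A$ whenever $A\subseteq A'$ and from reflexivity of $\rel{R}^D_A$ in a TEF. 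Hence, by induction, every $\D_i$ consists of formulae true at $(r,n)$. The loop terminates because every formula ever added lies in a fixed finite set of formulae determined by $\theta$ (the one within which the entire tableau is built), so we reach a fully expanded $\D^+ \supseteq \G$ still consisting entirely of formulae true at $(r,n)$.

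Finally, I would extract a minimal extension: among the finitely many, non-empty fully expanded sets $\D'$ with $\G\subseteq\D'\subseteq\D^+$, choose one, $\D$, that is $\subseteq$-minimal. Any fully expanded $\D''$ with $\G\subseteq\D''\subsetneq\D$ would also satisfy $\G\subseteq\D''\subseteq\D^+$, contradicting minimality of $\D$; so $\D$ is a minimal fully expanded extension of $\G$ in the sense of \Rule{SR}. Since $\D\subseteq\D^+$, every formula of $\D$ is true at $(r,n)$, hence $\D$ is not patently inconsistent and $\sat{M}{(r,n)}{\D}$; thus $\D$ (after the possible addition of $\next\truth$) belongs to $\st{\G}$, as required.

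\textbf{Main obstacle.} There is no real obstacle — the lemma is ``straightforward'' precisely because satisfaction at a point validates every fully-expansion step. The only thing needing care is the routine but slightly lengthy check that each of the eleven clauses of Definition~\ref{def:fully_expanded} is semantically sound, which rests on the fixpoint-style equivalences for $\until$ and $\commonk{A}$ and on the basic inclusions, reflexivity and transitivity of the epistemic relations in a TEF; the secondary point, termination of the saturation loop, is the same finiteness fact underpinning termination of the whole tableau construction.
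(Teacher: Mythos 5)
Your proposal is correct and is exactly the argument the paper has in mind: the paper omits the proof of this lemma, declaring it ``straightforward,'' and the intended justification is precisely your semantic saturation (each clause of Definition~\ref{def:fully_expanded} is validated by the satisfaction clauses, choosing true disjuncts at the nondeterministic steps) followed by extraction of a $\subseteq$-minimal fully expanded subset, which is then patently consistent and hence a member of $\st{\G}$.
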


\begin{lemma}
  \label{lm:DR_sound}
  Let $\D \in S_m^{\theta}$, for $m \geq 0$, be such that
  \sat{M}{(r,n)}{\D} for some TEM \mmodel{M} and point $(r,n)$, and
  let $\neg \distrib{A} \vp \in \D$.  Then, there exists a point
  $(r',n') \in \mmodel{M}$ such that $((r,n), (r',n')) \in \rel{R}^D_A$
  and \sat{M}{(r',n')}{\D'} where $\D' = \set{\neg \vp} \union
  \bigunion_{A' \subseteq A} \crh{\distrib{A'} \psi}{\distrib{A'} \psi
    \in \D} \union \bigunion_{A' \subseteq A} \crh{\neg \distrib{A'}
    \psi}{\neg \distrib{A'} \psi \in \D}$.
\end{lemma}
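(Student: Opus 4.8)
The plan is to read off the witness point directly from the semantics of $\distrib{A}$, and then to transfer the remaining $\distrib{A'}$-formulae (for $A' \subseteq A$) of $\D$ to that point using monotonicity of the epistemic relations in coalitions together with the fact that each $\rel{R}^D_{A'}$ is an equivalence relation. Note that the role of $\D$ as a state of $\tableau{T}_m^{\theta}$ plays no part: we only use that $\D$ is a set of formulae true at $(r,n)$.

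First I would observe that, since $\sat{M}{(r,n)}{\D}$ and $\neg \distrib{A} \vp \in \D$, we have $\sat{M}{(r,n)}{\neg \distrib{A} \vp}$; by the satisfaction clause for $\distrib{A}$ there is a point $(r',n')$ with $((r,n),(r',n')) \in \rel{R}^D_A$ and $\notsat{M}{(r',n')}{\vp}$, i.e.\ $\sat{M}{(r',n')}{\neg \vp}$. This $(r',n')$ is the point claimed in the statement, and it already accounts for the member $\neg \vp$ of $\D'$.

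Next I would check the two remaining families of formulae in $\D'$. Fix $A' \subseteq A$. Since $\mmodel{M}$ is a TEM, condition (\dag) of Definition~\ref{def:stefs} gives $\rel{R}^D_A = \biginter_{a \in A} \rel{R}^D_{\set{a}} \subseteq \biginter_{a \in A'} \rel{R}^D_{\set{a}} = \rel{R}^D_{A'}$, so $((r,n),(r',n')) \in \rel{R}^D_{A'}$. If $\distrib{A'} \psi \in \D$, then $\sat{M}{(r,n)}{\distrib{A'}\psi}$, and since $\rel{R}^D_{A'}$ is transitive, every point $\rel{R}^D_{A'}$-reachable from $(r',n')$ is also $\rel{R}^D_{A'}$-reachable from $(r,n)$, hence satisfies $\psi$; thus $\sat{M}{(r',n')}{\distrib{A'}\psi}$. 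If $\neg \distrib{A'} \psi \in \D$, then $\sat{M}{(r,n)}{\neg\distrib{A'}\psi}$; were $\distrib{A'}\psi$ true at $(r',n')$, then, using symmetry of $\rel{R}^D_{A'}$ to get $((r',n'),(r,n)) \in \rel{R}^D_{A'}$ and the same transfer argument, $\distrib{A'}\psi$ would be true at $(r,n)$, a contradiction; hence $\sat{M}{(r',n')}{\neg\distrib{A'}\psi}$. Combining the three observations yields $\sat{M}{(r',n')}{\D'}$, which is what we want.

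I do not expect a genuine obstacle. The only points that need care are (a) using the inclusion in the correct direction, $\rel{R}^D_A \subseteq \rel{R}^D_{A'}$ for $A' \subseteq A$, which is exactly (\dag) specialised to TEFs; and (b) making the persistence argument for both $\distrib{A'}$ and its negation depend only on the equivalence-relation properties of $\rel{R}^D_{A'}$, so that $\rel{R}^D_A$ itself is used solely to locate $(r',n')$. Everything else is a direct appeal to Definition~\ref{def:satisfaction}.
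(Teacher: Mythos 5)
Your proof is correct, and it fills in exactly the argument the paper leaves implicit: the paper states only that the proof of this lemma is ``straightforward'' and gives no details. Your three steps --- obtaining the witness $(r',n')$ from the semantics of $\neg\distrib{A}\vp$, using (\dag) to get $\rel{R}^D_A \subseteq \rel{R}^D_{A'}$ for $A' \subseteq A$, and transferring $\distrib{A'}\psi$ and $\neg\distrib{A'}\psi$ via the equivalence-relation properties of $\rel{R}^D_{A'}$ --- are precisely the intended standard argument.
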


\begin{lemma}
  \label{lm:Next_sound}
  Let $\D \in S_m^{\theta}$, for $m \geq 0$, be such that
  \sat{M}{(r,n)}{\D} for some TEM \mmodel{M} and a point $(r,n)$.
  Then, \sat{M}{(r,n+1)}{\next (\D)} where $\next (\D) =
  \crh{\vp}{\next \vp \in \D}$.
\end{lemma}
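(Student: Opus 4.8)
The plan is to simply unwind the definition of $\next(\D)$ and apply the satisfaction clause for $\next$ from Definition~\ref{def:satisfaction} pointwise. First I would fix an arbitrary formula $\vp \in \next(\D)$. By the definition $\next(\D) = \crh{\vp}{\next \vp \in \D}$, this means $\next \vp \in \D$. Since by hypothesis \sat{M}{(r,n)}{\D} --- which, by the set-satisfaction convention of Definition~\ref{def:sat_in_hintikka} applied in the model \mmodel{M}, means that every formula in $\D$ is satisfied at the point $(r,n)$ --- we obtain \sat{M}{(r,n)}{\next \vp}. By the satisfaction clause for $\next$, \sat{M}{(r,n)}{\next \vp} holds iff \sat{M}{(r,n+1)}{\vp}; hence \sat{M}{(r,n+1)}{\vp}. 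As $\vp$ was an arbitrary member of $\next(\D)$, it follows that every formula of $\next(\D)$ is satisfied at $(r,n+1)$, i.e. \sat{M}{(r,n+1)}{\next(\D)}, which is exactly the claim.

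There is essentially no obstacle here: the statement is an immediate consequence of the fact that the semantics of $\next$ is a one-step forward shift along the run $r$, combined with the reading of $\sat{M}{(r,n)}{\D}$ as ``each element of $\D$ holds at $(r,n)$''. The only minor point worth a remark is the degenerate case where $\next(\D)$ happens to be empty, in which case \sat{M}{(r,n+1)}{\next(\D)} holds vacuously; in fact, by step~2 of Rule~\Rule{SR} every state $\D$ created in the pretableau contains at least one formula of the form $\next \vp$ (if nothing else, $\next \truth$), so $\next(\D)$ is always nonempty, but this is not needed for the argument.
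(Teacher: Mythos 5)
Your proof is correct and is exactly the ``straightforward'' argument the paper has in mind (the paper omits the proof, noting only that it is immediate): unwind $\next(\D)$, read $\sat{M}{(r,n)}{\D}$ as elementwise satisfaction, and apply the semantic clause for $\next$. Nothing further is needed.
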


\begin{lemma}
  \label{lm:E3_sound1}
  Let $\D \in S_m^{\theta}$, for $m \geq 0$, be such that
  \sat{M}{(r,n)}{\D} for some TEM \mmodel{M} and a point $(r,n)$,
  and let $\neg \commonk{A} \vp \in \D$.  Then, $\neg \commonk{A} \vp$
  is realized at $\D$ in $\tableau{T}_m^{\theta}$.
\end{lemma}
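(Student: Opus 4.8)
The plan is to induct on the length of a shortest epistemic witness for the eventuality. Since $\sat{M}{(r,n)}{\D}$ and $\neg\commonk{A}\vp\in\D$, we have $\notsat{M}{(r,n)}{\commonk{A}\vp}$, so by the reachability reformulation of the semantics of $\commonk{A}$ there is a point that is $A$-reachable from $(r,n)$ and refutes $\vp$; fix one reached by a chain of $\rel{R}^D_a$-steps ($a\in A$) of least length $k$. I would prove, by induction on $k$, the slightly more general statement that any $\Delta\in S_m^\theta$ which contains $\neg\commonk{A}\vp$ and is satisfied at a point from which some $\vp$-refuting point is $A$-reachable in at most $k$ steps realizes $\neg\commonk{A}\vp$ in $\tableau{T}_m^\theta$; instantiating it at $\D$ gives the lemma. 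Throughout I would use the companion fact — established simultaneously, by induction on the elimination stage — that a satisfiable state is never deleted by \Rule{E1E}, \Rule{E1T}, or \Rule{E2}, so that any satisfiable successor of a state of $S_m^\theta$ created by \Rule{DR}/\Rule{PR} is again in $S_m^\theta$.

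Base case $k=0$: here $\sat{M}{(r,n)}{\neg\vp}$. If $\neg\vp\in\D$, the one-point path $\D$ realizes the eventuality. Otherwise clause~10 of Definition~\ref{def:fully_expanded} gives $\neg\distrib{a}(\vp\con\commonk{A}\vp)\in\D$ for some $a\in A$, and \Rule{DR} applied to this formula creates a prestate $\G$ containing $\neg(\vp\con\commonk{A}\vp)$ together with the $\distrib{a}$- and $\neg\distrib{a}$-formulas of $\D$. Since $\commonk{A}\vp$ is false at $(r,n)$, one checks $\sat{M}{(r,n)}{\G}$, and in particular $\vp\notin\G$ and the full expansion of $\G$ does not force $\vp$; hence some minimal fully expanded extension $\D'$ of $\G$ with $\neg\vp\in\D'$ is satisfied at $(r,n)$, so $\D'\in\st{\G}$. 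By \Rule{PR} we have $\D\stackrel{\neg\distrib{a}(\vp\con\commonk{A}\vp)}{\longrightarrow}\D'$ in $\tableau{T}_0^\theta$, $\D'$ survives into $\tableau{T}_m^\theta$, and the path $\D,\D'$ realizes $\neg\commonk{A}\vp$.

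Inductive step $k\ge1$: as before we obtain $\neg\distrib{a}(\vp\con\commonk{A}\vp)\in\D$ and the prestate $\G$ with $\sat{M}{(r,n)}{\G}$. Using Lemma~\ref{lm:DR_sound}, together with the fact that $\rel{R}^D_a$ is an equivalence relation (which transports the $\distrib{a}$- and $\neg\distrib{a}$-formulas of $\G$ freely within the $a$-class of $(r,n)$), I would locate a point $(r^*,n^*)$ with $((r,n),(r^*,n^*))\in\rel{R}^D_a$ that satisfies $\G$ and — this is the delicate part, see below — can be chosen to lie at distance at most $k-1$ from some $\vp$-refuting point; then Lemma~\ref{lm:expansion} yields $\D'\in\st{\G}$ satisfied at $(r^*,n^*)$. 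If $\neg\vp\in\D'$ we finish via the path $\D,\D'$; otherwise clause~3 forces $\neg\commonk{A}\vp\in\D'$, the induction hypothesis applies to $\D'$ with parameter $k-1$, and prefixing the edge $\D\stackrel{\neg\distrib{a}(\vp\con\commonk{A}\vp)}{\longrightarrow}\D'$ to a realizing path for $\D'$ gives one for $\D$.

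The main obstacle is exactly the choice of the strictly-closer witness point $(r^*,n^*)$: the coalition $a$ that clause~10 places inside $\D$ need not be the first coalition on a shortest $A$-path from $(r,n)$ to a $\vp$-refuting point, so one cannot simply ``follow'' such a path along $\rel{R}^D_a$. The resolution must exploit both degrees of freedom available — the choice of $a$-successor supplied by Lemma~\ref{lm:DR_sound} and the choice of expanded state supplied by Lemma~\ref{lm:expansion} — together with the observation that, inside the $A$-equivalence class of $(r,n)$, the formula $\vp\con\commonk{A}\vp$ is refuted at every point, so that each forced $\rel{R}^D_a$-step stays within that class; one then argues that no sequence of such steps can fail to reach a $\vp$-refuting state. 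Carrying this out, and dovetailing it with the simultaneous verification that satisfiable states survive \Rule{E1E}, \Rule{E1T} and \Rule{E2}, is the technical heart of the proof.
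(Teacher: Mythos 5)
Your overall strategy --- induct on the length $k$ of a shortest $A$-path in the model from $(r,n)$ to a $\vp$-refuting point, and simultaneously verify that satisfiable states survive elimination --- is the right formalization of the paper's (very brief) proof idea, and your base case is fine. But the inductive step has a genuine gap, which you yourself flag and do not close, and the intermediate claim you would need is false as stated. You want a point $(r^*,n^*)$ with $((r,n),(r^*,n^*))\in\rel{R}^D_a$, satisfying the \Rule{DR}-prestate $\G$, lying at distance at most $k-1$ from a $\vp$-refuting point. No such point need exist: take $\agents=A=\set{a,b}$, three points $x,y,z$ with $\rel{R}^D_a$-classes $\set{x},\set{y,z}$ and $\rel{R}^D_b$-classes $\set{x,y},\set{z}$, and $\vp=p$ true at $x,y$ and false at $z$. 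Then $\neg\commonk{A}p$ holds at $x$ with $k=2$, the formula $\neg\distrib{a}(p\con\commonk{A}p)$ is true at $x$ (so a state $\D$ satisfied at $x$ may contain it, with $a$ being the \emph{wrong} first agent), yet the only $a$-successor of $x$ is $x$ itself, still at distance $2$. So no single tableau step along the arrow that $\D$ actually provides can decrease $k$, and your closing suggestion that ``no sequence of such steps can fail to reach a $\vp$-refuting state'' is also not true without further argument: always re-choosing the wrong agent yields states satisfied at $x$ forever.

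The fix is to make the inductive step two-phased (equivalently, to induct on a lexicographic measure such as $2k$ plus a bit recording whether $\D$ contains $\neg\distrib{a_0}(\vp\con\commonk{A}\vp)$ for some agent $a_0$ beginning a shortest path from $(r,n)$). If $\D$ contains only a ``wrong'' $\neg\distrib{a}(\vp\con\commonk{A}\vp)$, first take a \emph{stationary} step: the prestate $\G$ it generates is satisfied at $(r,n)$ itself (reflexivity of $\rel{R}^D_a$ plus $\sat{M}{(r,n)}{\neg\commonk{A}\vp}$), and since $\neg\distrib{a_0}(\vp\con\commonk{A}\vp)$ is \emph{true} at $(r,n)$ for the correct first agent $a_0$ of a shortest path, among the minimal fully expanded extensions of $\G$ there is one, $\D'\in\st{\G}$, satisfied at $(r,n)$ that resolves clause~10 by choosing $a_0$ (this needs a refined form of Lemma~\ref{lm:expansion}: every truth-consistent resolution of the disjunctive clauses is witnessed by some member of $\st{\G}$). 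Only from such a $\D'$ does one step along $\neg\distrib{a_0}(\vp\con\commonk{A}\vp)$ reach a state satisfied at the second point of the shortest path, at distance $k-1$. With this correction the realizing tableau path has length at most $2k+1$ rather than $k+1$, and the induction goes through; without it, the step you describe does not.
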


\begin{proofidea}
  Since $\neg \commonk{A} \vp$ is true at $s$, there is a path in
  \mmodel{M} from $s$ leading to a state satisfying $\neg \vp$.  Since 
  the tableau performs exhaustive search, a chain of tableau
  states corresponding to those states in the model will be produced.
\end{proofidea}

The next lemma is proved likewise.

\begin{lemma}
  \label{lm:E3_sound1}
  Let $\D \in S_m^{\theta}$, for $m \geq 0$, be such that
  \sat{M}{(r,n)}{\D} for some TEM \mmodel{M} and a point $(r,n)$,
  and let $\vp \until \psi \in \D$.  Then, $\vp \until \psi$ is
  realized at $\D$ in $\tableau{T}_m^{\theta}$.
\end{lemma}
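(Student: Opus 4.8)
The plan is to mirror the argument for the epistemic eventuality in the preceding lemma, but now chasing a \emph{temporal} chain of tableau states along the run $r$ instead of an $A$-path. Since $\sat{M}{(r,n)}{\vp \until \psi}$, the clause for $\until$ in Definition~\ref{def:satisfaction} gives a least index $i \geq n$ with $\sat{M}{(r,i)}{\psi}$, and with $\sat{M}{(r,j)}{\vp}$ for all $n \leq j < i$. Starting from $\D_0 = \D$, I would build a finite sequence $\D_0, \D_1, \ldots, \D_m$ of states of $\tableau{T}_m^{\theta}$ such that each $\D_k$ is satisfied at the point $(r, n+k)$ in $\mmodel{M}$ and contains $\vp \until \psi$, such that $\D_k \longrightarrow \D_{k+1}$ and $\vp \in \D_k$ for every $k < m$, and such that $\psi \in \D_m$. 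By the definition of eventuality realization, such a path witnesses that $\vp \until \psi$ is realized at $\D$ in $\tableau{T}_m^{\theta}$.

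To pass from $\D_k$ to $\D_{k+1}$: if $\psi \in \D_k$, stop and set $m := k$. Otherwise, since $\D_k$ is fully expanded and satisfied at $(r, n+k)$, the full-expansion condition for $\vp \until \psi$ forces $\vp \in \D_k$ and $\next(\vp \until \psi) \in \D_k$; hence, by \Rule{Next}, the prestate $\G_k = \crh{\chi}{\next \chi \in \D_k}$ contains $\vp \until \psi$, and by Lemma~\ref{lm:Next_sound} it is satisfied at $(r, n+k+1)$. I then pick $\D_{k+1} \in \st{\G_k}$ satisfied at $(r, n+k+1)$ using the construction behind Lemma~\ref{lm:expansion}, with one refinement: when $\sat{M}{(r, n+k+1)}{\psi}$, the disjunctive choices in building a minimal fully expanded extension of $\G_k$ can be resolved so as to respect the truth values in $\mmodel{M}$ and, in particular, to put $\psi$ into the extension; otherwise any satisfied $\D_{k+1} \in \st{\G_k}$ will do, and it automatically contains $\vp \until \psi$ since $\vp \until \psi \in \G_k \subseteq \D_{k+1}$. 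After the prestate elimination phase $\D_k \longrightarrow \D_{k+1}$ holds, and the prestates $\G_k$ and states $\D_k$ involved are all produced by the exhaustive pretableau construction, so the whole path lies in the initial tableau $\tableau{T}_0^{\theta}$.

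Two points need care. Termination: if the sequence never stopped, the $\psi$-branch would never be taken, so $\psi$ would be false at $(r, n+k)$ for every $k \geq 1$; but $\D_1$ is satisfied at $(r, n+1)$ and contains $\vp \until \psi$, so $\sat{M}{(r,n+1)}{\vp \until \psi}$ and the semantics of $\until$ force $\sat{M}{(r,j)}{\psi}$ for some $j \geq n+1$, a contradiction. Hence the sequence stops after finitely many steps. The genuine obstacle is to see that the path lies in $\tableau{T}_m^{\theta}$ rather than merely in $\tableau{T}_0^{\theta}$; this rests on the fact that no TEM-satisfiable state is ever deleted during the state elimination phase. That fact is not available in isolation: it is established simultaneously with the present lemma (and the epistemic realization lemma) by induction on the elimination stage, the inductive step using Lemmas~\ref{lm:DR_sound}, \ref{lm:Next_sound} and~\ref{lm:expansion} together with the two realization lemmas to rule out the removal of a satisfiable state by \Rule{E1E}, \Rule{E1T} or \Rule{E2}. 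Granting this, every $\D_k$ survives to stage $m$, and the constructed path realizes $\vp \until \psi$ at $\D$ in $\tableau{T}_m^{\theta}$.
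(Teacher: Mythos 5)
Your proposal is correct and follows essentially the same route as the paper, which for this lemma only offers a two-sentence proof idea (the model's witnessing path for the eventuality is shadowed, via exhaustive search, by a chain of satisfiable tableau states); you have simply carried that idea out in detail, including the standard refinement that the disjunctive expansion of $\vp \until \psi$ is resolved in accordance with truth in $\mmodel{M}$ so that the chain terminates. Your closing observation---that realization in $\tableau{T}_m^{\theta}$ (rather than just $\tableau{T}_0^{\theta}$) must be established by simultaneous induction with the claim that no satisfiable state is ever eliminated---is exactly how the paper handles it in the proof sketch of Theorem~\ref{thr:soundness}.
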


\begin{theorem}
  \label{thr:soundness}
  If $\theta \in \lang$ is satisfiable in a TEM, then
  $\tableau{T}^{\theta}$ is open.
\end{theorem}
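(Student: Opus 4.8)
The plan is to show that the state-elimination phase never deletes a \emph{satisfiable} state, and that the initial prestate $\set{\theta}$ gives rise to at least one satisfiable state containing $\theta$; openness of $\tableau{T}^{\theta}$ is then immediate. Call a state $\D$ of $\tableau{T}_0^{\theta}$ \emph{satisfiable} if $\sat{M}{(r,n)}{\D}$ for some TEM $\mmodel{M}$ and some point $(r,n)$ of $\mmodel{M}$, and similarly for prestates. By hypothesis $\sat{M}{(r,n)}{\theta}$ for some TEM and point, so $\set{\theta}$ is a satisfiable prestate; by Rule \Rule{SR} together with Lemma~\ref{lm:expansion}, some $\D\in\st{\set{\theta}}$ is satisfiable, and since every member of $\st{\set{\theta}}$ extends $\set{\theta}$, this $\D$ contains $\theta$. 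Thus it suffices to prove the invariant that, at every stage $n$ of the elimination phase, every satisfiable state of $\tableau{T}_0^{\theta}$ still belongs to $S_n^{\theta}$.

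I would prove this invariant by induction on $n$, the base case $n=0$ being trivial. For the inductive step, assume the invariant holds for $\tableau{T}_n^{\theta}$ and that stage $n+1$ removes a state $\D$; I claim $\D$ cannot be satisfiable, and argue by cases on the rule applied. Suppose \Rule{E1E} removes $\D$ because $\chi=\neg\distrib{A}\vp\in\D$ has no $\chi$-successor in $S_n^{\theta}$: if $\D$ were satisfiable, Lemma~\ref{lm:DR_sound} provides a point satisfying the set produced by Rule \Rule{DR}, and by the construction of \Rule{DR} and the prestate-elimination rule \Rule{PR} the state $\D$ has $\stackrel{\chi}{\longrightarrow}$-edges to all states of $\st{\G}$ for the corresponding prestate $\G$; Lemma~\ref{lm:expansion} then yields a satisfiable $\D'\in\st{\G}$, which lies in $S_n^{\theta}$ by the induction hypothesis, contradicting the precondition of \Rule{E1E}. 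The case of \Rule{E1T} is analogous, using Lemma~\ref{lm:Next_sound} in place of Lemma~\ref{lm:DR_sound} and noting that, by step~2 of \Rule{SR}, every state contains some $\next\vp$, so Rule \Rule{Next} did create a temporal-successor prestate of $\D$. Finally, if \Rule{E2} removes $\D$ because of an unrealized eventuality $\xi\in\D$, then, $\D$ being satisfiable, Lemma~\ref{lm:E3_sound1} (in its version for $\xi=\neg\commonk{A}\vp$, respectively for $\xi=\vp\until\psi$) supplies a realizing path for $\xi$ at $\D$ in $\tableau{T}_n^{\theta}$; every state on that path is satisfied at a corresponding model point, hence is satisfiable, hence lies in $S_n^{\theta}$ by the induction hypothesis, so $\xi$ \emph{is} realized at $\D$ in $\tableau{T}_n^{\theta}$ --- a contradiction. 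Since each single step of the dovetailed elimination schedule is an application of one of \Rule{E1E}, \Rule{E1T}, \Rule{E2}, the same analysis applies throughout, so the invariant is preserved.

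Combining the two parts: the satisfiable state $\D$ with $\theta\in\D$ identified at the outset belongs to $S_n^{\theta}$ for every $n$, hence to $S^{\theta}$, so $\tableau{T}^{\theta}$ is open. I expect the main obstacle to be the treatment of \Rule{E2}: the realization lemmas guarantee realization of an eventuality \emph{inside the current tableau}, and one must be sure the witnessing chain of states has not itself been pruned at an earlier stage --- this is exactly what the induction hypothesis delivers, which is why the argument must be organized as the single induction above rather than applied to each rule in isolation. A secondary point needing care is the edge bookkeeping after \Rule{PR}: one has to verify that the necessary successor guaranteed by Lemmas~\ref{lm:DR_sound} and~\ref{lm:Next_sound} is reached from $\D$ by an edge of the appropriate kind in $\tableau{T}_0^{\theta}$, which follows directly from the forms of \Rule{DR}, \Rule{Next} and \Rule{PR}.
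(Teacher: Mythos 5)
Your proposal is correct and follows essentially the same route as the paper: the paper's own argument is exactly an induction on the number of stages of the state-elimination phase showing, via Lemmas~\ref{lm:expansion}--\ref{lm:E3_sound1}, that no satisfiable state is ever eliminated, with openness then following from Lemma~\ref{lm:expansion} applied to the initial prestate $\set{\theta}$. The paper only states this as a sketch; your write-up is a faithful, more detailed elaboration of that same sketch, including the correct observation that the induction hypothesis is what keeps the realizing paths for \Rule{E2} intact.
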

\begin{proofsketch}
  Using the preceding lemmas, we show by induction on the number of
  stages in the state elimination phase that no satisfiable state can
  be eliminated due to any of the elimination rules.  The claim then
  follows from Lemma~\ref{lm:expansion}.
\end{proofsketch}

The \emph{completeness} of a tableau procedure means that if the
tableau for a formula $\theta$ is open, then $\theta$ is satisfiable
in a TEM. In view of Theorem~\ref{thr:models_equal_hintikka}, it
suffices to show that an open tableau for $\theta$ can be turned into
a TEHS for $\theta$.

\begin{lemma}
  \label{lm:open_tableau_hintikka}
  If $\tableau{T}^{\theta}$ is open, then a (synchronous) TEHS for
  $\theta$ exists.
\end{lemma}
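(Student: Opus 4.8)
\begin{proofsketch}
The plan is to extract a (synchronous) TEHS for $\theta$ from the open final tableau $\tableau{T}^{\theta}$ by a controlled unraveling in which every eventuality is forced to be realized. Since $\tableau{T}^{\theta}$ is open, fix a state $\Delta^{\ast} \in S^{\theta}$ with $\theta \in \Delta^{\ast}$. We build $\hintikka{H} = (\agents, S, R, \{\rel{R}^D_A\}_{A \in \powerne{\agents}}, \{\rel{R}^C_A\}_{A \in \powerne{\agents}}, H)$ whose points are copies of states of $\tableau{T}^{\theta}$, taking $H(r,n)$ to be the tableau state attached to the point $(r,n)$; since a TEHS distinguishes points from states, using arbitrarily many copies of a state costs nothing, and \textbf{H1}, \textbf{H2} then come for free --- \textbf{H2} because states produced by \Rule{SR} are fully expanded, \textbf{H1} because patently inconsistent sets never become states. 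The runs of $\hintikka{H}$ will be $\omega$-sequences of points linked by temporal ($\longrightarrow$) edges of the tableau, and the epistemic relations will be generated by the $\stackrel{\neg \distrib{A} \vp}{\longrightarrow}$ edges produced by \Rule{DR}; since in the synchronous case \Rule{DR} keeps the time-stamp of the prestate it creates, all epistemic edges join points of equal time, so synchrony is automatic.

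The construction proceeds in stages over a fixed enumeration $\xi_1, \ldots, \xi_k$ of the eventualities, in the spirit of the completeness arguments of \cite{Wolper85} and \cite{GorSh09}. At each stage we maintain a finite approximation to $\hintikka{H}$ and a queue of outstanding obligations of the form ``the point $(r,n)$ carries the eventuality $\xi$, not yet realized at $(r,n)$''. The key leverage is that, since $\tableau{T}^{\theta}$ is the \emph{final} tableau, \Rule{E2} guarantees that every eventuality occurring in a surviving state is realized at it: for $\vp \until \psi \in \Delta$ there is a finite $\longrightarrow$-path from $\Delta$ to a state containing $\psi$ all of whose intermediate states contain $\vp$, and for $\neg \commonk{A} \vp \in \Delta$ there is a finite path from $\Delta$ to a state containing $\neg \vp$ every link of which is an epistemic edge indexed by a coalition $B \subseteq A$. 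To discharge an obligation we splice a fresh copy of the appropriate witnessing path into the approximation, adding each temporal link as part of a run and each epistemic link between states $\Delta_i$ and $\Delta_{i+1}$ (along an edge created by \Rule{DR} for a coalition $B \subseteq A$) as a pair in $\rel{R}^D_B$, at the common time-stamp. Runs remain extendable at every point: a state surviving into $\tableau{T}^{\theta}$ has a $\longrightarrow$-successor there --- otherwise \Rule{E1T} would have deleted it --- and by step~2 of \Rule{SR} every state does contain some $\next \vp$, so a temporal successor exists. Finally we let $\rel{R}^C_A$ be the reflexive and transitive closure of $\bigcup_{A' \subseteq A} \rel{R}^D_{A'}$, as Definition~\ref{def:stes} demands; no further closure of the $\rel{R}^D_A$ is needed, these being arbitrary relations in a TES.

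Checking the remaining Hintikka conditions is then largely mechanical. \textbf{H3} holds since the temporal successor of a point labelled $\Delta$ is a state extending $\{\vp : \next \vp \in \Delta\}$, the prestate created by \Rule{Next}. \textbf{H5} and \textbf{H6} follow from \Rule{DR} together with survival under \Rule{E1E}: if $\neg \distrib{A} \vp \in \Delta$, then \Rule{DR} created the prestate $\{\neg \vp\} \cup \bigcup_{A' \subseteq A}\{\distrib{A'}\psi : \distrib{A'}\psi \in \Delta\} \cup \bigcup_{A' \subseteq A}\{\neg \distrib{A'}\psi : \neg \distrib{A'}\psi \in \Delta\}$, some expansion of which survives as a state $\Delta'$ with $\Delta \stackrel{\neg \distrib{A}\vp}{\longrightarrow} \Delta'$, and we make the corresponding point an $\rel{R}^D_A$-successor; the ``iff'' of \textbf{H6} for all $A' \subseteq A$ comes from the shape of this prestate and clauses~7 and~11 of Definition~\ref{def:fully_expanded} (using that the formulas in play lie in a fixed finite closure set, so that for each relevant $\distrib{A'}\psi$ either it or its negation is in every state, hence $\Delta'$ cannot inherit a $\distrib{A'}\psi$ not already forced by $\Delta$ without becoming patently inconsistent). \textbf{H4} and \textbf{H7} are exactly what the splicing of witnessing paths produces.

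The real work --- and where I expect the main obstacle --- is the fairness bookkeeping. Splicing in a witnessing path creates new points, which may carry their own eventualities and so spawn new obligations; the stages must be organised (append new obligations at the back, discharge the front one at each stage, and observe that only finitely many obligations are created per stage) so that every obligation is discharged after finitely many steps, whence in the limit every eventuality at every point of $\hintikka{H}$ is realized. One must also prevent the temporal and epistemic splicings from interfering and, in the synchronous case, confine all epistemic work to single time-levels while remaining free to prolong runs --- which is precisely what the time-stamping built into the pretableau construction makes possible. Everything else is a routine verification against Definitions~\ref{def:stes} and~\ref{def:hs}; by Theorem~\ref{thr:models_equal_hintikka}, exhibiting such a TEHS is all that completeness requires.
\end{proofsketch}
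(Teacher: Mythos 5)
Your proposal is correct and follows essentially the same route as the paper's own sketch: unravel the open tableau into points labelled by tableau states, use \Rule{E2} to obtain witnessing paths that are spliced in to realize epistemic and temporal eventualities, rely on \Rule{E1E}/\Rule{E1T} for the existence of the required successors, and on the time-stamps for synchrony. The one concrete device the paper supplies where you only flag "the main obstacle" is how a point freshly created at temporal level $m$ acquires a past so as to lie on a genuine run without disturbing already-processed levels: the paper organizes the construction level by level, alternating epistemic and temporal realization, and pads each new point with a history of $m-1$ states of the form $\set{\truth}$.
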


\begin{proofsketch}
  The TEHS \hintikka{H} for $\theta$ is built by induction on the
  temporal levels, in order to take care of synchrony.  The main concern is to ensure that all  eventualities in the resultant structure are realized (all other properties of Hintikka structures easily transfer from an open tableau).  We alternate between realizing epistemic eventualities (formulae of the form $\neg \commonk{A} \vp$) and temporal eventualities (formulae of the form $\vp \until \psi$).

  We start by building the $0$th level of our prospective Hintikka
  structure from the level $0$ of the open tableau.  For each state
  $\D^{[0]}$ on this level, if $\D^{[0]}$ does not contain any
  epistemic eventualities, we define epistemic component for
  $\D^{[0]}$ to be $\D^{[0]}$ with exactly one successor reachable by
  $\neg \distrib{A} \psi$, for each $\neg \distrib{A} \psi \in
  \D^{[0]}$; if, on the other hand, $\neg \commonk{A} \vp \in
  \D^{[0]}$, then such a component is a tree obtained from a path in
  the tableau realizing $\neg \commonk{A} \vp$ at $\D^{[0]}$ by giving
  each component of the path ``enough'' successors, as described
  above.  We recursively repeat the procedure extending the current
  tree by attaching to its leaves associated components. As all the
  unrealized epistemic eventualities are propagated down the
  components (hence, appear in the leaves of the tree), we can stitch
  them up together to obtain a structure in which epistemic
  eventuality is realized.

  Now, having built the $0$th level of our prospective Hintikka
  structure, we take care of realizing all the temporal eventualities
  contained in the states of level $0$.  This is done exactly as in
  the completeness proof of the tableau procedure for \LTL: we define
  the temporal component for each $\D^{[0]}$ as follows: if $\D^{[0]}$
  does not contain any temporal eventualities, then we take $\D^{[0]}$
  with one of its temporal successors; otherwise, we take a temporal
  path realizing $\vp \until \psi \in \D^{[0]}$.  As eventualities are
  again passed down, we can stitch up an infinite, or ultimately
  periodic, path realizing all the eventualities contained in the
  states making up the path.

  Next, we repeat the procedure inductively.  For the $m$th epistemic
  level, we independently apply to each state on this level the
  procedure described above for level $0$, so that ``epistemic
  structures'' unfolding from any two points on level $m$ are
  disjoint, and also give to each newly created point a ``history''
  consisting of a path of $m-1$ states of the form \set{\truth} (so
  that we do not create any new epistemic eventualities at the levels
  we have already ``processed''). Having fixed all the epistemic
  eventualities at the $m$th level, we repeat the procedure described
  in the previous paragraph to fix all the temporal eventualities
  contained in states of level $m$.

  Thus, we produce a chain of structures ordered by
  inclusion. Eventually, we take the (infinite) union of all the
  structures defined at the finite states of that construction, and
  then put $H(\D^{[n]}) = \D^{[n]}$ for every $\D^{[n]}$, to obtain a
  TEHS for $\theta$.
\end{proofsketch}
\begin{theorem}[Completeness]
  \label{thr:completeness}
  Let $\theta \in \lang$ and let $\tableau{T}^{\theta}$ be open.
  Then, $\theta$ is satisfiable.
\end{theorem}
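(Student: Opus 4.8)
The plan is to obtain this statement as an essentially immediate corollary of two results already established: Lemma~\ref{lm:open_tableau_hintikka}, which turns an open tableau for $\theta$ into a TEHS for $\theta$, and Theorem~\ref{thr:models_equal_hintikka}, which states that $\theta$ is satisfiable in a TEM iff there exists a TEHS satisfying $\theta$. Thus no new machinery is needed at this point; the proof simply composes these facts. In the write-up I would make this composition explicit, since the real work has been front-loaded into the two cited results.

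Concretely, the steps are as follows. First, assume $\tableau{T}^{\theta}$ is open; by the definition of openness there is a state $\D \in S^{\theta}$ with $\theta \in \D$. Second, apply Lemma~\ref{lm:open_tableau_hintikka} to conclude that a (synchronous) TEHS $\hintikka{H}$ for $\theta$ exists; unpacking Definition~\ref{def:sat_in_hintikka}, this means $\hintikka{H}$ has a point $(r,n)$ with $\theta \in H(r,n)$, i.e. $\hintikka{H}$ satisfies $\theta$. Third, invoke the right-to-left direction of Theorem~\ref{thr:models_equal_hintikka}: since there exists a TEHS satisfying $\theta$, the formula $\theta$ is satisfiable in a TEM, which is exactly the desired conclusion. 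Because the TEHS delivered by Lemma~\ref{lm:open_tableau_hintikka} is synchronous and the intermediate Lemmas~\ref{lem:STEHS_to_psSTEM} and~\ref{lem:psSTEM_to_STEM} preserve synchrony, the same chain actually yields satisfiability in a \emph{synchronous} TEM, and the parallel analysis of the asynchronous variant of \Rule{DR} sketched earlier gives the claimed satisfiability-wise equivalence of the two semantics.

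I do not expect a genuine obstacle inside this theorem itself: all the difficulty lives in Lemma~\ref{lm:open_tableau_hintikka}, whose level-by-level construction must realize every temporal and epistemic eventuality (conditions (H4) and (H7)) while respecting synchrony and not introducing fresh eventualities on already-processed temporal levels. Conditions (H1)--(H3), (H5), (H6) transfer from the open tableau more or less by construction of states (fully expanded, patently consistent) and of the \Rule{DR}/\Rule{Next} edges, while realization is exactly what survives the elimination rules \Rule{E2}, \Rule{E1E}, \Rule{E1T}. When finalizing the proof of this theorem I would at most add one sanity check — that the witnessing state $\D$ with $\theta \in \D$ does occur as a labelled point of the constructed $\hintikka{H}$ (it does, since the construction of Lemma~\ref{lm:open_tableau_hintikka} builds an epistemic/temporal unfolding rooted at every state of $\tableau{T}^{\theta}$ and sets $H(\D^{[n]}) = \D^{[n]}$) — and otherwise simply cite Lemma~\ref{lm:open_tableau_hintikka} and Theorem~\ref{thr:models_equal_hintikka}.
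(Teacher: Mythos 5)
Your proposal is correct and follows exactly the paper's own argument: the paper also derives completeness immediately by composing Lemma~\ref{lm:open_tableau_hintikka} (open tableau yields a TEHS for $\theta$) with Theorem~\ref{thr:models_equal_hintikka} (TEHS satisfiability implies TEM satisfiability). Your additional remarks about synchrony and the witnessing state are consistent with, and merely elaborate on, what the paper leaves implicit.
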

\begin{proof}
  Immediate from Lemma~\ref{lm:open_tableau_hintikka} and
  Theorem~\ref{thr:models_equal_hintikka}. 
\end{proof}

As for complexity, for lack of space, we only state that it runs
within exponential time (the calculation is routine). Therefore, the
\CMATELCDLT-satisfiability is in EXPTIME, which together with the
EXPTIME-hardness result from~\cite{HV89}, implies that it is
EXPTIME-complete.

\section{Concluding remarks}
\label{sec:concluding}

We developed an incremental-tableau based decision procedure for the
full coalitional multiagent temporal-epistemic logic of linear time
\CMATELCDLT. In this case, there is no essential interaction between
the temporal and the epistemic dimensions, which makes the tableau
construction easier to build and less expensive to run, by reducing it
to a combination of tableaux for \LTL\ and for the (epistemic) logic
\CMAELCD\ developed in \cite{GorSh09}.  We are convinced that our
procedure is---besides being rather intuitive---practi\-cally much
more efficient than the top-down tableaux, e.g., developed for a
fragment of our logic in~\cite{HM92}, and hence better suited to both
manual and automated execution. It is also easily amenable to
modifications suited to reasoning about subclasses of distributed
systems, e.g., those with a unique initial state. The branching time
case, which will be considered in a sequel to this paper, is
essentially a combination of tableaux for \CTL\ with those for
\CMAELCD.  On the other hand, the development of tableau-based
procedures for those logics from~\cite{HV89} whose satisfiability
problem has \mbox{EXPSPACE} lower bound is an open challenge.

\bibliographystyle{abbrv}

\end{document}